\newtheorem{theorem}{Theorem}[section]
\newtheorem{lemma}[theorem]{Lemma}
\newtheorem{proposition}[theorem]{Proposition}
\newcommand{\upchi}{\raise1pt\hbox{$\chi$}}
\newcommand{\R}{{\mathord{\mathbb R}}}
\newcommand{\Z}{{\mathord{\mathbb Z}}}
\newcommand{\N}{{\mathord{\mathbb N}}}
\newcommand{\E}{{\mathord{\mathbb E}}}
\newcommand{\PP}{{\mathord{\mathbb P}}}
\begin{document}

\title[The random displacement model]{Low energy properties\\ of the random displacement model}
\author[Baker]{Jeff Baker$^1$}
\author[Loss]{Michael Loss$^2$}
\author[Stolz]{G\"unter Stolz$^3$}

\address{$^1$ Southern Company Generation, 600 North 18th Street, Birmingham, Alabama 35291-2641, jeffbake@southernco.com}

\address{$^2$ Georgia Institute of Technology, School of Mathematics,
Atlanta, Georgia 30332-0160, loss@math.gatech.edu}

\address{$^3$ University of Alabama at Birmingham, Department of Mathematics, Birmingham, Alabama 35294-1170, stolz@uab.edu}

\date{\today}
\maketitle

\vspace{.3truein}
\centerline{\bf Abstract}

\medskip
{\sl We study low-energy properties of the random displacement model, a random Schr\"odinger operator describing an electron in a randomly deformed lattice. All periodic displacement configurations which minimize the bottom of the spectrum are characterized. While this configuration is essentially unique for dimension greater than one, there are infinitely many different minimizing configurations in the one-dimensional case. The latter leads to unusual low energy asymptotics for the integrated density of states of the one-dimensional random displacement model. For symmetric Bernoulli-distributed displacements it has a $1/\log^2$-singularity at the bottom of the spectrum. In particular, it is not H\"older-continuous. }

\section{Introduction}

We consider the so-called random displacement model, i.e.\ the random Schr\"odinger operator
\begin{equation} \label{eq:hamiltonian}
H_{\omega} = -\Delta + V_{\omega},
\end{equation}
where the random potential $V_{\omega}$ is given by displacing a single site potential $q$ from the points of $\Z^d$,
\begin{equation} \label{eq:potential}
V_{\omega}(x) = \sum_{i\in \Z^d} q(x-i-\omega_i).
\end{equation}
For the real-valued single site potential $q$ we assume $q\in L^{\infty}(\R^d)$ and supp$\,q \subset [-r,r]^d$ for some $r<1/2$. We also assume that $q$ is reflection symmetric at each coordinate hyperplane, i.e.\ symmetric in each variable with the remaining variables fixed.
Throughout this paper we will consider displacement configurations $\omega = (\omega_i)_{i\in \Z^d}$ such that $\omega_i \in [-d_{max},d_{max}]^d$ for all $i$, where $r+d_{max}=1/2$. The latter ensures that the displaced single site potentials in (\ref{eq:potential}) do not overlap.

While the random displacement model is a physically quite natural way to describe structural disorder in a solid, it is mathematically much less well understood than Anderson-type models, where the disorder enters in the form of random couplings at the single-site potentials. This is mostly due to the fact that the random displacement model depends non-monotonously on the random parameters (in quadratic form sense). Anderson-type models, on the other hand, are monotonous in this sense, at least if the single-site potentials have fixed sign. The consequential challenge in determining the spectral properties of the random displacement model, in particular the low energy behavior, lies in having to gain a deeper understanding of the interaction between multiple random parameters as well as the interplay between kinetic and potential energy. Spectral averaging arguments using individual random parameters, a common tool in the theory of Anderson models, are not available here.

This is one of the main reasons why it is not yet known if the multi-dimensional random displacement model is localized at the bottom of the spectrum (in $d=1$ localization at all energies follows from the results in \cite{DSS}, see also \cite{India}). The only known result on localization for the multi-dimensional random displacement model is due to Klopp \cite{Klopp}, who considered the semi-classical version $-h^2\Delta+V_{\omega}$ of the random displacement model and identified a localized region near the bottom of the spectrum for sufficiently small values of the semi-classical parameter $h$.

An attempt to understand the low energy properties of the random displacement model has to start with describing the mechanism which characterizes the bottom of the spectrum. Under the above assumptions this has been achieved in \cite{BLS1} by identifying the periodic displacement configuration $\omega^{min}$, see (\ref{eq:minimizer}) below, which leads to the minimum of the almost sure spectrum of $H_{\omega}$.

Here we continue our study of the low energy properties of the random displacement model by first characterizing the set of {\it all} minimizing periodic configurations. It turns out that $\omega^{min}$, up to trivial translations, is the unique minimizer in $d\ge 2$, while in $d=1$ there are infinitely many periodic minimizers, which can be explicitly characterized. These results are stated in Section~\ref{sec2} and proven in Section~\ref{sec3}.

We then move to studying the low energy asymptotics of the integrated density of states (IDS) for the random displacement model. Showing smallness of the IDS near the bottom of the spectrum, typically in the form of {\it Lifshits tails}, is an important step in all approaches to low-energy localization for multi-dimensional random Schr\"odinger operators. It is interpreted as showing that the bottom of the spectrum is a {\it fluctuation boundary}.

However, as we will show here, for the one-dimensional random displacement model the behavior of the IDS can be very different. If the displacements only take values $d_{max}$ or $-d_{max}$, both with equal probability, then the IDS $N(E)$ has a very strong singularity at the bottom of the spectrum,
\begin{equation} \label{eq:logbound1}
 N(E) \ge \frac{C}{\log^2(E-E_0)}
 \end{equation}
for $E\in (E_0,E_0+\varepsilon)$ and constants $C>0$ and $\varepsilon>0$, see Theorem~\ref{thm:bernoulli}. Here $E_0$ denotes the almost sure minimum of the spectrum of $H_{\omega}$. Thus the IDS is not even H\"older-continuous at $E_0$, a new phenomenon which to our knowledge has not been found for any other models of random operators. This and related results for the IDS of the one-dimensional random displacement model with other distributions of the displacements are proven in Section~\ref{sec4}. For example, the extreme behavior (\ref{eq:logbound1}) only appears for the given case of a symmetric Bernoulli distribution of the displacements, see Theorem~\ref{thm:nonbernoulli}. But, in $d=1$ and as long as the distribution of the displacements $\omega_n$ is symmetric, one never gets Lifshits tails (Theorem~\ref{thm:lifshitsexp}).

In Section~\ref{sec5} we comment on several open problems, in particular on our expectation that the uniqueness of the periodic minimizing configuration in $d\ge 2$ should indicate different low energy asymptotics for the IDS than in $d=1$ (e.g.\ the appearance of some form of Lifshits tails). We will also discuss related recent works by Klopp and Nakamura \cite{Klopp/Nakamura} and Fukushima \cite{Fukushima}.

\section{Periodic configurations which minimize the ground state energy} \label{sec2}

In \cite{BLS1} we have identified a simple periodic configuration of displacements which leads to the lowest possible spectral minimum for $H_{\omega}$ among all configurations $\omega$:

\begin{proposition}[Theorem~1.1 in \cite{BLS1}] \label{prop:BLS1a}
Let $\omega^{min}$ be given by
\begin{equation} \label{eq:minimizer}
\omega_i^{min} = ((-1)^{i_1} d_{max}, \ldots, (-1)^{i_d} d_{max})
\end{equation}
for all $i=(i_1,\ldots,i_d)\in \Z^d$. Then
\begin{equation} \label{eq:minproperty}
E_0 = \min \sigma(H_{\omega^{min}}),
\end{equation}
where $E_0 := \inf_{\omega} \min \sigma (H_{\omega})$.
\end{proposition}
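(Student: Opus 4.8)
The plan is to reduce the global minimization to a single unit cell by Neumann bracketing, and then to use the reflection symmetry of $q$ in two separate ways. Since $r+d_{max}=1/2$, each displaced potential $q(\cdot-i-\omega_i)$ in \eqref{eq:potential} is supported in the closed unit cube $C_i:=i+[-\tfrac12,\tfrac12]^d$, and the $C_i$ tile $\R^d$; hence $V_\omega|_{C_i}=q(\cdot-i-\omega_i)$, and Neumann bracketing over this tiling gives, in the sense of quadratic forms,
\begin{equation*}
H_\omega \;\ge\; \bigoplus_{i\in\Z^d}\Big(-\Delta^N_{C_i}+q(\cdot-i-\omega_i)\Big).
\end{equation*}
Translating the $i$-th block by $-i$ identifies it with $-\Delta^N_{Q_0}+q(\cdot-\omega_i)$ on $Q_0:=[-\tfrac12,\tfrac12]^d$. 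Writing $e(a):=\min\sigma\big(-\Delta^N_{Q_0}+q(\cdot-a)\big)$, which is continuous in $a$ and hence attains its minimum over the compact cube $[-d_{max},d_{max}]^d$, we obtain $\min\sigma(H_\omega)\ge\inf_a e(a)$ for every admissible $\omega$. Since $\omega^{min}$ is itself admissible, \eqref{eq:minproperty} holds as soon as one establishes, with $a^*:=(d_{max},\dots,d_{max})$,
\begin{equation*}
\text{(i)}\ \ \min\sigma(H_{\omega^{min}})=e(a^*), \qquad\qquad \text{(ii)}\ \ \inf_a e(a)=e(a^*).
\end{equation*}

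For (i) I would observe that the reflection of $q(\cdot-a^*)$ — which is supported in the corner of $Q_0$ adjacent to $(\tfrac12,\dots,\tfrac12)$ — through all the hyperplanes $\{x_j=m+\tfrac12\}$, $m\in\Z$, reproduces exactly $V_{\omega^{min}}$: in the checkerboard pattern \eqref{eq:minimizer} the single-site potentials of two adjacent cells are pushed against their common face, so, using also the evenness of $q$, the potential $V_{\omega^{min}}$ is invariant under every reflection $x_j\mapsto 2m+1-x_j$. Consequently the positive Neumann ground state $\psi_{a^*}$ of $-\Delta^N_{Q_0}+q(\cdot-a^*)$, extended by these reflections, is a bounded positive weak solution of $H_{\omega^{min}}\psi=e(a^*)\psi$ on $\R^d$, the Neumann boundary condition $\partial_n\psi_{a^*}|_{\partial Q_0}=0$ being precisely what makes the pieces match up $C^1$-smoothly across the cell walls, so that $\psi\in H^2_{\mathrm{loc}}$. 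A bounded positive generalized eigenfunction at energy $e(a^*)$ forces $e(a^*)\in\sigma(H_{\omega^{min}})$ and, by positivity, $e(a^*)=\min\sigma(H_{\omega^{min}})$, which is (i).

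Claim (ii) is the crux, and the step I expect to be the main obstacle. Using the elementary symmetries $e(a)=e(\varepsilon_1a_1,\dots,\varepsilon_da_d)$, $\varepsilon_j\in\{\pm1\}$, it suffices to show that $t\mapsto e(t,a')$ is non-increasing on $[0,d_{max}]$ for each fixed $a'=(a_2,\dots,a_d)\in[0,d_{max}]^{d-1}$. The Feynman–Hellmann formula, together with the evenness of $q$ in its first variable, yields for the normalized positive ground state $\psi_t$
\begin{equation*}
\frac{d}{dt}\,e(t,a')=-\int_{y_1>0}(\partial_1 q)(y)\,\big[\psi_t(y+(t,a'))^2-\psi_t(\sigma_1 y+(t,a'))^2\big]\,dy,\qquad \sigma_1(y_1,y'):=(-y_1,y'),
\end{equation*}
and one must show the integral on the right is nonnegative. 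Here the sign-indefiniteness of $q$ bites: the integrand is not pointwise signed, and — as stressed in the Introduction — no spectral-averaging or quadratic-form monotonicity argument is available, so one genuinely has to control how the ground state moves with the bump. The route I would take is to reflect $Q_0$ across $\{x_1=\tfrac12\}$, which turns $e(t,a')$ into the energy of the (necessarily $\{x_1=\tfrac12\}$-even) Neumann ground state of the symmetric double bump $q(\cdot-(t,a'))+q(\cdot-(1-t,a'))$; then (ii) becomes the statement that merging two reflection-symmetric copies of $q$ lowers the Neumann ground state energy, which one proves by analyzing the monotonicity as $t\uparrow d_{max}$ using the evenness of the ground state about the merging plane and the sign structure forced by the symmetry of $q$. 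Granting (i) and (ii), we get $\min\sigma(H_\omega)\ge\inf_a e(a)=e(a^*)=\min\sigma(H_{\omega^{min}})$ for every admissible $\omega$, and since the reverse inequality is trivial this gives $E_0=\min\sigma(H_{\omega^{min}})$.
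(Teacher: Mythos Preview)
This proposition is not proved in the present paper; it is quoted as Theorem~1.1 of \cite{BLS1}, so there is no in-text proof to compare against. Your architecture, however, is exactly that of \cite{BLS1}: Neumann bracketing over the unit cubes $C_i$ reduces the global problem to the single-cell ground state $e(a)$, and the two ingredients are (i) $\min\sigma(H_{\omega^{min}})=e(a^*)$ and (ii) $\inf_a e(a)=e(a^*)$.

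Part (i) is correctly argued: the Neumann ground state on $Q_0$ with the bump in the corner extends by successive reflections to a bounded positive weak solution on all of $\R^d$ precisely because $V_{\omega^{min}}$ is invariant under each reflection $x_j\mapsto 2m+1-x_j$; Shnol's theorem plus positivity then pin down the spectral bottom. This is also the mechanism behind the proof of Theorem~\ref{thm:thm1}(a) in this paper.

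Part (ii) is, as you recognise, the substance --- it is exactly Proposition~\ref{prop:BLS1b} here (Theorem~1.3 of \cite{BLS1}). You correctly diagnose why a direct Feynman--Hellmann argument fails (no pointwise sign on the integrand once $q$ is sign-indefinite), and your reflection-to-the-double-cell move is indeed the opening step of \cite{BLS1}: the Neumann problem on $Q_0$ with bump at $(t,a')$ is unitarily equivalent, via even extension across $\{x_1=\tfrac12\}$, to the Neumann problem on the doubled box with the symmetric two-bump potential. But the final clause --- ``one proves by analyzing the monotonicity as $t\uparrow d_{max}$ using the evenness of the ground state about the merging plane and the sign structure forced by the symmetry of $q$'' --- is where the argument stops being a proof. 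Evenness about $\{x_1=\tfrac12\}$ is automatic for the doubled ground state, but it does not by itself produce a sign for $\tfrac{d}{dt}e(t,a')$; one still has to control how the ground-state density redistributes as the two bumps approach each other, and for sign-indefinite $q$ there is no ``sign structure of $q$'' to appeal to. In \cite{BLS1} this gap is closed by a further reflection combined with a variational comparison argument, which is the genuinely new idea there. So your route is the right one and lines up with the cited reference, but the decisive monotonicity step --- the actual content of Proposition~\ref{prop:BLS1b} --- is asserted rather than supplied.
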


This configuration is $2$-periodic in each coordinate, where in each period cell $2^d$ single sites cluster together in adjacent corners of unit cubes, see Figure~\ref{fig:minimizer} for $d=2$.

\begin{figure}[h]
  \centering
  \includegraphics[width=0.35\textwidth]{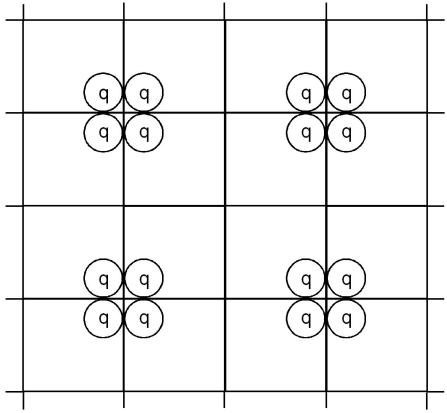}
  \caption{The support of $V_{\omega^{min}}$ for $d=2$.} \label{fig:minimizer}
\end{figure}

Here our first goal is to characterize {\it all} periodic configurations $\omega$ such that $\min \sigma(H_{\omega}) = E_0$. For this we will also use another result found in \cite{BLS1}: Let $q$ be as above, $a\in [-d_{max},d_{max}]^d$, $\Lambda_0 = (-\frac{1}{2}, \frac{1}{2})^d$ the unit cube centered at $0$, and $H_{\Lambda_0}^N(a) =-\Delta+q(x-a)$ on $L^2(\Lambda_0)$ with Neumann boundary condition on $\partial \Lambda_0$.

\begin{proposition}[Theorem~1.3 of \cite{BLS1}] \label{prop:BLS1b}
If
\begin{equation} \label{eq:mineval}
E_0(a) := \min \sigma(H_{\Lambda_0}^N(a)),
\end{equation}
then the following alternative holds: Either
\begin{itemize}
\item[(i)] $E_0(a)$ is strictly maximized at $a=0$ and strictly
minimized in the $2^d$ corners\\ $(\pm d_{max}, \ldots, \pm
d_{max})$ of $[-d_{max},d_{max}]^d$, or
\item[(ii)] $E_0(a)$ is identically zero. In this case the
corresponding eigenfunction is constant outside of the support of
$q$.
\end{itemize}
\end{proposition}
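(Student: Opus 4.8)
The plan is to analyse the ground state $\psi_a>0$ of $H_{\Lambda_0}^N(a)$ on the field-free region $\Lambda_0\setminus \supp q(\cdot-a)$, together with the reflection symmetry of $q$. Since $\psi_a$ is the Perron--Frobenius ground state it is simple and strictly positive, so $E_0(a)$ is a simple eigenvalue and, outside the alternative (ii), real-analytic in $a$; moreover, because $q$ is even in each variable and $\Lambda_0$ is a cube, $E_0(a)=E_0(|a_1|,\dots,|a_d|)$. Hence it suffices to work on $[0,d_{max}]^d$ and to show that, with all but one coordinate frozen, $a_j\mapsto E_0(a)$ is non-increasing on $[0,d_{max}]$ and strictly decreasing unless we are in case (ii); this yields simultaneously the strict maximum at $a=0$ and the strict minimum at the $2^d$ corners.

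First I would settle the one-dimensional case, which already contains the mechanism. On each of the two field-free intervals $(-1/2,a-r)$ and $(a+r,1/2)$ the ground state solves $-\psi''=E\psi$ with a Neumann condition at the outer endpoint, so there it is an explicit cosine (if $E:=E_0(a)>0$), $\cosh$ (if $E<0$), or affine (if $E=0$) piece; the Neumann condition makes it monotone in the distance to the wall, and when $E=0$ it is \emph{constant}. Feynman--Hellmann gives $\frac{d}{da}E_0(a)=-\int q'(x-a)\,\psi^2\,dx$, and one integration by parts followed by insertion of the eigenvalue equation turns the right-hand side into the boundary term $E_0(a)\big(\psi(1/2)^2-\psi(-1/2)^2\big)$. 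To determine its sign I would use the transfer matrix $M=M(E)$ across $\supp q$, which by evenness of $q$ has equal diagonal entries, together with the explicit field-free pieces and the no-node (ground state) requirement; this should force $\psi(1/2)^2-\psi(-1/2)^2$ to have the sign opposite to $E_0(a)$ whenever $a\in(0,d_{max})$ and $E_0(a)\neq 0$, giving strict monotonicity. If instead $E_0(a_0)=0$ for some $a_0$, then $\psi_{a_0}$ is constant outside $\supp q$, and the matching data at energy $0$ (the vector $(1,0)$ being an eigenvector of $M(0)$ with eigenvalue $1$) is a condition on $q$ alone; being independent of $a_0$, it then produces, for \emph{every} $a$, a positive zero-energy eigenfunction that is constant outside the well, which is precisely alternative (ii).

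The step from $d=1$ to $d\ge 2$ is where the real difficulty lies, since $q(x-a)$ does not factor. The boundary computation above still goes through in the $a_j$-direction and yields, in place of $\psi(1/2)^2-\psi(-1/2)^2$, the difference over the two opposite faces $\{x_j=\pm1/2\}$ of the integral of $E_0(a)\,\psi_a^2-|\nabla'\psi_a|^2$ (with $\nabla'$ the tangential gradient). Extracting the sign of this now requires a multidimensional substitute for the transfer-matrix argument; I would obtain it by reflecting $\psi_a$ across one wall $\{x_j=1/2\}$ so as to realise $E_0(a)$ as the ground-state energy of a symmetric box containing two mirror-image copies of $q$ whose $x_j$-separation is $1-2a_j$, and then comparing these configurations by a variational/reflection argument as $a_j$ varies; equivalently one proves the $d$-dimensional statement by induction, folding one coordinate at a time. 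For alternative (ii) in higher dimensions one still has to show that, once $E_0\equiv 0$, the ground state is constant outside $\supp q$: the natural route is to exploit that the field-free slabs meeting the outer walls carry Neumann data, so that Schwarz reflection turns $\psi_a$ into a positive bounded harmonic function subject to strong periodicity constraints, and to combine this with the fact that $0$ is the \emph{bottom} of the spectrum. I expect the main obstacle to be making the sign of $\frac{\partial}{\partial a_j}E_0(a)$ rigorous for $d\ge 2$ — in particular excluding interior critical points other than $a_j=0$ — since the clean one-dimensional monodromy bookkeeping has no direct higher-dimensional analogue and must be replaced by the reflection/induction scheme, and, to a lesser extent, pinning down the rigidity that characterizes case (ii) in that setting.
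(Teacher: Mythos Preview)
This proposition is not proved in the present paper: it is quoted verbatim from \cite{BLS1} (as Theorem~1.3 there) and used only as input for the results of Sections~\ref{sec2}--\ref{sec4}. So there is no proof here to compare your proposal against.

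On its own merits, your proposal is a reasonable outline but is explicitly incomplete at the two decisive points. In $d=1$ your Feynman--Hellmann identity
\[
\frac{d}{da}E_0(a)=E_0(a)\big(\psi(\tfrac12)^2-\psi(-\tfrac12)^2\big)
\]
is correct and elegant, but you do not actually establish the sign of $\psi(\tfrac12)^2-\psi(-\tfrac12)^2$; you only say the transfer matrix ``should force'' it. There is also a structural wrinkle in this route: the formula gives $\partial_a E_0=0$ whenever $E_0(a)=0$, so if $E_0$ changes sign along $[0,d_{max}]$ the derivative argument alone cannot exclude interior critical points --- you must invoke your ``alternative (ii)'' analysis (that $(1,0)$ being fixed by the transfer matrix $M(0)$ is a condition on $q$ alone, hence forces $E_0\equiv 0$) at exactly that point. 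Your handling of that case is sound; it just needs to be woven into the monotonicity proof rather than stated as a separate observation. In $d\ge 2$ you correctly identify that the one-dimensional transfer-matrix bookkeeping has no direct analogue and propose a reflection/doubling scheme comparing two mirror-image copies of $q$; that is in fact the mechanism behind the proof in \cite{BLS1}, which obtains the monotonicity directly from a variational/symmetry comparison without passing through a derivative formula. So your instinct for the higher-dimensional step points in the right direction, but you would be better served using reflection as the primary tool in all dimensions rather than as a rescue for the Feynman--Hellmann approach.
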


Alternative (i) holds for generic potentials, for example if $q$ is non-zero and sign-definite, in which case $E_0(a)$ never vanishes. But sign-definiteness of $q$ is far from necessary for alternative (i). Alternative (ii) holds if and only if the Neumann problem for $-\Delta +q$ restricted to $(-r,r)^d$ has ground state energy $0$ (if supp$\,q$ is simply connected with sufficiently regular boundary one may equivalently use the Neumann problem on supp$\,q$ here). Examples of sign-indefinite $q$ with this property can easily be constructed.

\begin{theorem} \label{thm:thm1}
(a) If alternative (ii) holds, then $\min \sigma(H_{\omega}) =0$ for all configurations $\omega= (\omega_i)_{i\in \Z^d}$, with $\omega_i \in [-d_{max},d_{max}]^d$ for all $i$.

(b) If alternative (i) holds, $d\ge 2$ and $r<1/4$, then $\omega^{min}$ given by (\ref{eq:minimizer}) is, up to translations, the unique periodic configuration with $\min \sigma(H_{\omega^{min}}) = E_0$.
\end{theorem}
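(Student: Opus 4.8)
The plan is to reduce the problem, via Neumann bracketing, to the behaviour of the Neumann ground state on a single unit cube, and then to exploit the reflection symmetry of $q$ together with the first alternative of Proposition~\ref{prop:BLS1b}. For \emph{part (a)}, note first that for periodic $\omega$ one has $\min\sigma(H_\omega)=\min\sigma(H_\omega^{\mathrm{per}})$, the bottom of the spectrum on the torus $\R^d/\Gamma_\omega$ with periodic boundary conditions (the lowest band is minimized at zero quasimomentum), and this is attained by a strictly positive ground state. Since $\supp q(\cdot-i-\omega_i)\subset\overline{\Lambda_i}$ with $\Lambda_i:=\Lambda_0+i$, Neumann bracketing into the unit cubes gives $\min\sigma(H_\omega)\ge\inf_i E_0(\omega_i)\ge\inf_a E_0(a)$, which is $0$ in alternative (ii). For the reverse inequality one exhibits, in alternative (ii), a bounded positive generalized eigenfunction of $H_\omega$ at energy $0$ built from the single-site ground states (each constant outside $\supp q$, so that the pieces glue with matching Cauchy data), whence $0\in\sigma(H_\omega)$. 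Thus $\min\sigma(H_\omega)=0$, which is $E_0$ by Proposition~\ref{prop:BLS1a}.

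\emph{Part (b): reduction.} I would first record that $E_0=\mu:=\min_a E_0(a)$, the common value of $E_0$ at the $2^d$ corners under alternative (i): Neumann bracketing of an arbitrary configuration gives $E_0\ge\mu$, while reflecting the Neumann ground state $\psi$ of $H^N_{\Lambda_0}(a_0)$, with $a_0=(d_{max},\dots,d_{max})$, successively across the faces of $\Lambda_0$ produces a positive $2\Z^d$-periodic eigenfunction of $H_{\omega^{min}}$ at energy $\mu$ — because the reflected potential is $V_{\omega^{min}}$ — so $\min\sigma(H_{\omega^{min}})\le\mu$ and hence $E_0=\mu$. Now let $\omega$ be a periodic minimizer with torus ground state $\Psi>0$. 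Testing the Rayleigh quotient of $H_\omega^{\mathrm{per}}$ against $\Psi$ and splitting the integral over the unit cubes, the $i$-th term is at least $E_0(\omega_i)\int_{\Lambda_i}|\Psi|^2\ge\mu\int_{\Lambda_i}|\Psi|^2$; since the total equals $\mu\|\Psi\|^2$, equality must hold at every step, which (using $\Psi>0$ everywhere) forces $E_0(\omega_i)=\mu$ for \emph{every} $i$ — so by the first alternative of Proposition~\ref{prop:BLS1b} each $\omega_i$ is one of the corners $(\pm d_{max},\dots,\pm d_{max})$ — and forces $\Psi|_{\Lambda_i}$ to be a Neumann ground state on $\Lambda_i$.

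\emph{Part (b): rigidity and conclusion.} By uniqueness of the Neumann ground state, and because reflecting the problem in a coordinate (using the reflection symmetry of $q$) carries ground state to ground state, each $\Psi|_{\Lambda_i}$ equals $\gamma_i\,\psi\circ R_{S_i}$ with $\gamma_i>0$, where $S_i=\{k:\omega_i^{(k)}=-d_{max}\}$ and $R_{S_i}$ reflects the coordinates in $S_i$ about the center of $\Lambda_i$. Since $\Psi$ also satisfies Neumann conditions on every grid face from both sides, matching $\Psi$ across a grid face perpendicular to $e_k$ reduces, after these reflections, to the requirement that the trace of $\psi$ on the face of $\Lambda_0$ toward which $a_0$ is displaced and its trace on the opposite face — each composed with a reflection in the coordinates tangent to the face and rescaled — agree. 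The key input, call it the rigidity of $\psi$, is that in alternative (i), for $d\ge 2$ and $r<1/4$, these two traces admit no nontrivial tangential reflection symmetry and are not proportional to one another even after a tangential reflection. Granting this, the face matching can hold only in the ``reflection'' case: for all $i$ and $k$, $\omega_{i+e_k}$ is obtained from $\omega_i$ by reversing its $k$-th sign (and conversely, when this holds the reflected ground states glue automatically, by equality of Cauchy data). Iterating gives $\omega_i^{(k)}=(-1)^{i_k}\omega_0^{(k)}$, i.e.\ $\omega$ is, up to translation, $\omega^{min}$.

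\emph{The main obstacle} is the rigidity statement, which is the only place the hypotheses are genuinely used. In $d=1$ the two ``traces'' are single positive numbers, hence always proportional — exactly why the one-dimensional model admits a continuum of periodic minimizers — so $d\ge 2$ is essential. The condition $r<1/4$ places the corner blob inside a single closed octant of $\Lambda_0$, so that a slab adjacent to each face of $\Lambda_0$ is potential-free; one would rule out the accidental symmetries and proportionalities by unique continuation (Holmgren) from such a potential-free slab — a violating identity would force $\psi$ to coincide with a reflected copy of itself on a region where it is not constant, the non-constancy being guaranteed precisely by alternative (i) — and then by checking that the resulting overdetermined elliptic problems are inconsistent. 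This comparison of $\psi$ with its reflections, set against the strictness in Proposition~\ref{prop:BLS1b}, is the delicate technical core of the argument.
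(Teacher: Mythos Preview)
Your treatment of part (a) and of the reduction step in part (b) --- that each $\omega_i$ must be a corner and that $\Psi|_{\Lambda_i}$ is the Neumann ground state on each unit cube --- is correct and coincides with the paper's Lemma~\ref{lem:lem1}. The genuine gap is in the rigidity step.

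You phrase rigidity as a property of the boundary traces of the single-cube ground state $\psi$: the ``near'' and ``far'' face traces should not be proportional, and neither should be invariant under a nontrivial tangential reflection. But this formulation already hides a case analysis you never carry out. When adjacent corners have equal $k$-th sign one is comparing a near-face trace to a far-face trace; when the $k$-th signs are opposite but some tangential component disagrees one is comparing a trace to a tangentially reflected copy of \emph{itself}. Your text only explicitly names the first situation, and for neither do you give an argument beyond the suggestion ``unique continuation from a potential-free slab''. That suggestion points at the right analytic tool but does not say which overdetermined problem arises or why it is inconsistent; as written it is a placeholder, not a proof.

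The paper avoids the trace picture entirely and works directly on the union $R=\Lambda\cup\Lambda'$ of two adjacent cubes whose potential is \emph{not} reflection-symmetric across the common face $P$. The key lemma (Lemma~\ref{lem:lem3}) is purely about $-\Delta u=Eu$: if $u$ solves this on a connected region $D$ and $\partial u/\partial n=0$ on $P\cap D$, then $u$ extends symmetrically to a solution on $D\cup\sigma(D)$. One applies it with $D$ equal to $R$ minus the two corner blobs and $u=\Psi|_D$; the hypothesis $r<1/4$ guarantees that $D\cap\sigma(D)$ is connected, and the non-symmetric placement guarantees $D\cup\sigma(D)=R$. Thus $\Psi$ extends to a Neumann eigenfunction of $-\Delta$ on all of $R$, forcing $E_0=0$ and $\Psi$ constant --- contradicting alternative~(i). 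One short reflection lemma, proved by analyticity of solutions of $-\Delta u=Eu$, replaces your entire (unexecuted) case analysis.
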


We don't believe that the extra condition $r<1/4$ (beyond $r<1/2$) is required in Theorem~\ref{thm:thm1}, but we need it in our proof.

It remains to settle the case of alternative (i) and $d=1$. In this case the periodic minimizer is highly non-unique, but the set of all periodic minimizers can be characterized by our next result. Here, for $L\in \N$ let $S_L$ denote the set of all $L$-periodic configurations $(\omega_i)_{i\in \Z}$ such that $\omega_i =-d_{max}$ or $\omega_i=d_{max}$ for all $i$. Furthermore, for $\omega \in S_L$ let $n^{\pm}(\omega)$ be the number of $i\in \{1,\ldots, L\}$ with $\omega_i = \pm d_{max}$.

\begin{theorem} \label{thm:thm2}
Let $d=1$ and $q$ such that alternative (i) holds. An $L$-periodic configuration $\omega$ satisfies $\min \sigma(H_{\omega})=E_0$ if and only if $L$ is even, $\omega \in S_L$, and $n^-(\omega) =n^+(\omega)$.
\end{theorem}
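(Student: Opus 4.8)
The plan is to compute $\min\sigma(H_\omega)$ by Neumann decoupling on the unit cells $\Lambda_j:=j+\Lambda_0=(j-\tfrac12,j+\tfrac12)$, $j\in\Z$. Since $r+d_{max}=\tfrac12$, each displaced potential $q(\cdot-j-\omega_j)$ is supported inside $\Lambda_j$, so Neumann bracketing gives $H_\omega\ge\bigoplus_j\big(-\Delta^N_{\Lambda_j}+q(\cdot-j-\omega_j)\big)$ and hence $\min\sigma(H_\omega)\ge\inf_j E_0(\omega_j)\ge\min_{|a|\le d_{max}}E_0(a)$, which under alternative~(i) equals $E_0(d_{max})=E_0(-d_{max})$. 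Denote by $u_+$ the positive Neumann ground state of $-\Delta+q(\cdot-d_{max})$ on $\Lambda_0$ and by $u_-:=u_+(-\,\cdot\,)$ that of $-\Delta+q(\cdot+d_{max})$ (using that $q$ is even), and set $\alpha:=u_+(-\tfrac12)$, $\beta:=u_+(\tfrac12)$, so that $u_-(-\tfrac12)=\beta$ and $u_-(\tfrac12)=\alpha$. The one genuinely analytic ingredient, to be established first from the strict extremal behavior of $E_0(\cdot)$ in Proposition~\ref{prop:BLS1b}, is that under alternative~(i) one has $\alpha\ne\beta$; this inequality is precisely what will exclude unbalanced configurations.

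Next comes the gluing construction. Given $\omega\in S_L$ with $n^+(\omega)=n^-(\omega)$ (so $L=2n^+$ is even), put on each $\Lambda_j$ the function $\psi:=c_j\phi_j$, where $\phi_j$ is the translate to $\Lambda_j$ of $u_+$ or of $u_-$ according as $\omega_j=+d_{max}$ or $-d_{max}$, and choose scalars $c_j>0$ recursively so that $\psi$ is continuous at each cell edge $j+\tfrac12$, that is $c_{j+1}\lambda_{j+1}=c_j\rho_j$, where $(\lambda_j,\rho_j)$ is the pair of boundary values of $\phi_j$ at $j-\tfrac12$ and $j+\tfrac12$ (equal to $(\alpha,\beta)$ when $\omega_j=+d_{max}$ and to $(\beta,\alpha)$ otherwise). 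Because every $\phi_j$ satisfies Neumann conditions, $\psi'$ vanishes from both sides at every edge, so $\psi$ is automatically $C^1$ and therefore a genuine (i.e.\ $H^2_{\mathrm{loc}}$) solution of $H_\omega\psi=E_0(d_{max})\psi$ on all of $\R$. The recursion closes up to an $L$-periodic choice of the $c_j$ exactly because $\prod_{j=1}^L\rho_j/\lambda_{j+1}=\beta^{n^+}\alpha^{n^-}/(\alpha^{n^+}\beta^{n^-})=(\beta/\alpha)^{\,n^+-n^-}=1$. As $\psi$ is positive and periodic, $E_0(d_{max})$ is the bottom of $\sigma(H_\omega)$. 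Applying this to $\omega^{min}\in S_2$ and invoking Proposition~\ref{prop:BLS1a} identifies $E_0=E_0(d_{max})=\min_{|a|\le d_{max}}E_0(a)$, and so the construction proves the ``if'' direction.

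For the ``only if'' direction, let $\omega$ be $L$-periodic with $\min\sigma(H_\omega)=E_0$. Since $H_\omega$ is periodic, $E_0$ is attained (by standard Floquet theory, at quasimomentum $0$) by a positive $L$-periodic eigenfunction $\psi$. Multiplying $-\psi''+q(\cdot-j-\omega_j)\psi=E_0\psi$ by $\psi$ and integrating over $\Lambda_j$ yields $\langle\psi,(-\Delta^N_{\Lambda_j}+q(\cdot-j-\omega_j))\psi\rangle_{\Lambda_j}=E_0\|\psi\|_{\Lambda_j}^2+[\psi'\psi]_{j-1/2}^{\,j+1/2}$; summing over one period $j=1,\dots,L$, the boundary terms telescope and cancel by $L$-periodicity, so $\sum_{j=1}^L\langle\psi,(-\Delta^N_{\Lambda_j}+q(\cdot-j-\omega_j))\psi\rangle_{\Lambda_j}=E_0\sum_{j=1}^L\|\psi\|_{\Lambda_j}^2$. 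Each left-hand summand is $\ge E_0(\omega_j)\|\psi\|_{\Lambda_j}^2\ge E_0\|\psi\|_{\Lambda_j}^2$, so all inequalities are in fact equalities: the equality $E_0(\omega_j)=E_0=\min_{|a|\le d_{max}}E_0(a)$ forces, by alternative~(i), $\omega_j\in\{-d_{max},d_{max}\}$, i.e.\ $\omega\in S_L$; and equality in the variational principle forces $\psi|_{\Lambda_j}$ to be the positive Neumann ground state on $\Lambda_j$, hence $\psi|_{\Lambda_j}=c_j\phi_j$ with $c_j>0$ as above. Continuity of $\psi$ at the cell edges gives once more $c_{j+1}\lambda_{j+1}=c_j\rho_j$, and $L$-periodicity of $\psi$ forces $\prod_{j=1}^L\rho_j/\lambda_{j+1}=(\beta/\alpha)^{n^+-n^-}=1$; since $\alpha\ne\beta$ this yields $n^+=n^-$, whence $L=n^++n^-$ is even. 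I expect the main obstacle to be establishing $\alpha\ne\beta$: geometrically it says that the unit-cell Neumann ground state with the potential pushed into a corner is genuinely asymmetric, and it is the one place where the \emph{strict} (as opposed to merely non-strict) extremal property of $E_0(a)$ supplied by Proposition~\ref{prop:BLS1b} must be used — without it the constant configuration $\omega\equiv d_{max}$ would also be a minimizer. Once $\alpha\ne\beta$ is in hand, the rest is bookkeeping with the gluing recursion and the telescoping of boundary terms.
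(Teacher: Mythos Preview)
Your approach matches the paper's: Neumann decoupling on unit cells (the paper's Lemma~3.1), the recursive gluing/ratio argument (the paper packages this via the transfer matrix, writing $u_N(L+\tfrac12)=r^{\,n_+-n_-}u_N(\tfrac12)$ with $r=\beta/\alpha$), and the identification of $\alpha\ne\beta$ as the key analytic input. Your telescoping integration-by-parts with the periodic Floquet eigenfunction is a mild variant of the paper's route, which instead compares the periodic and Neumann problems on the large box and then reads off $u_N(L+\tfrac12)=1$ from $\det T=1$, $\operatorname{tr}T=2$; your version is arguably more direct, and you also spell out the ``if'' direction explicitly, which the paper leaves implicit.

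The one genuine gap you flag---proving $\alpha\ne\beta$---is the paper's Lemma~3.2, and it does not follow immediately from the strict extremality of $E_0(\cdot)$. The argument is: if $\alpha=\beta$, then the Neumann ground state $\psi$ of $H^N_{\Lambda_0}(d_{max})$ is also the periodic ground state; by the symmetry of $q$ and uniqueness of the periodic ground state, $\psi'(-\tfrac12+d_{max})=0$ (this is the point diametrically opposite $d_{max}$ on the $1$-torus). On the potential-free interval $[-\tfrac12,-\tfrac12+d_{max}]$ one has $-\psi''=E_0(d_{max})\psi$ with $\psi'=0$ at both ends; strict convexity (if $E_0(d_{max})<0$) or strict concavity (if $E_0(d_{max})>0$) of the positive $\psi$ is incompatible with this, so $E_0(d_{max})=0$ and $\psi$ is constant on the free interval---precisely alternative~(ii), a contradiction. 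Thus the input you actually need is the exclusion of alternative~(ii), not the strict minimization at the corners per se.
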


Thus, in each period interval of $V_{\omega}$, equally many of the single site potentials sit at the extreme right and the extreme left of their allowed range of positions. The {\it dimer configuration} $\omega^{min}$ is merely a special case of this situation.

The short explanation behind Theorems~\ref{thm:thm1} and \ref{thm:thm2} is that for $d=1$ the ground states of $H_{\Lambda_0}^N(a)$ for extremal positions of $a$ can be made to match after suitable re-scaling, while the richer geometry for $d\ge 2$ prevents this for all configurations other than $\omega^{min}$.

\section{Proof of Theorems~\ref{thm:thm1} and \ref{thm:thm2}} \label{sec3}

Part (a) of Theorem~\ref{thm:thm1} easily follows from known facts: Under alternative (ii) the ground state energy of the Neumann problem for $-\Delta+q$ restricted to $(-r,r)^d$ is $0$. Let $\varphi$ be the corresponding positive normalized eigenfunction. For any given configuration $\omega$ place a translate of $\varphi$ at $i+\omega_i$ for each $i\in \Z^d$ and extend by a constant to the exterior of the convex hull of supp$\,q(\cdot- i- \omega_i)$. This results in a globally bounded, positive weak solution $\tilde{\varphi}$ of $-\Delta \tilde{\varphi} + V_{\omega} \tilde{\varphi}=0$. By Shnol's Theorem we have $0\in \sigma(H_{\omega})$. As $\tilde{\varphi}$ is positive it also follows from Theorem~C.8.1 in \cite{Simon} that $0\le \min \sigma(H_{\omega})$. Thus $\min \sigma(H_{\omega}) =0$ for all configurations $\omega$.

For the remainder of this section we will assume that alternative (i) holds. If $\omega$ is a periodic configuration with period $L=(L_1, \ldots, L_d) \in \N^d$, i.e.\ such that $\omega_{i+(n_1 L_1, \ldots, n_d L_d)} = \omega_i$ for all $i\in\Z^d$ and and $(n_1,\ldots,n_d) \in \Z^d$, we choose $\Lambda = (1/2,\ldots, L_1+1/2) \times \ldots (1/2, \ldots, L_d+1/2) \subset \R^d$ as period cell of the potential $V_{\omega}$ given by (\ref{eq:potential}). By $H_{\omega,\Lambda}^P$ and $H_{\omega,\Lambda}^N$ we denote the restrictions of $-\Delta+V_{\omega}$ to $L^2(\Lambda)$ with periodic and Neumann boundary conditions, respectively, and denote their lowest eigenvalues by $E_0(H_{\omega,\Lambda}^P)$ and $E_0(H_{\omega,\Lambda}^N)$.

The proofs of Theorems~\ref{thm:thm1}(b) and Theorem~\ref{thm:thm2} will be based on the following result which holds for arbitrary dimension. It shows, in particular, that in minimizing periodic configurations all single site potentials necessarily must sit in the corners of unit cubes centered at the points of $\Z^d$.

\begin{lemma} \label{lem:lem1}
Let $\omega$ be a periodic configuration with $\min \sigma(H_{\omega}) =E_0$. Then, for all $i\in \Z^d$, $\omega_i \in \{(a_1,\ldots,a_d)\in \R^d: a_k \in \{-d_{max}, d_{max}\} \:\mbox{for all $k=1,\ldots,d$}\}$. Moreover, in this case $E_0(H_{\omega,\Lambda}^P) = E_0(H_{\omega,\Lambda}^N)$ and the ground state eigenfunction $\psi_{\omega}$ of $H_{\omega,\Lambda}^N$ satisfies Neumann boundary conditions on the boundary of each unit cube $\Lambda_i$ centered at $i\in \Lambda \cap \Z^d$.
\end{lemma}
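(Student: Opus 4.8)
The plan is to combine a Neumann bracketing argument with the structure of the single-cube problem from Proposition~\ref{prop:BLS1b}. First I would recall the standard lower bound via Neumann bracketing: for any periodic configuration $\omega$ with period cell $\Lambda$, the operator $H_\omega$ on $L^2(\R^d)$ has $\min\sigma(H_\omega) = E_0(H_{\omega,\Lambda}^P) \ge E_0(H_{\omega,\Lambda}^N)$, and by further Neumann decoupling along the faces of the unit cubes $\Lambda_i$, $i\in \Lambda\cap\Z^d$, one gets $E_0(H_{\omega,\Lambda}^N) \ge \sum_i E_0(H^N_{\Lambda_i}(\omega_i)) / |\Lambda\cap\Z^d| \ge \min_i E_0(\omega_i) \ge \min_{a} E_0(a)$, where $E_0(a)$ is as in (\ref{eq:mineval}) (using translation invariance to move each $\Lambda_i$ to $\Lambda_0$). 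On the other hand, the known fact (from \cite{BLS1}, underlying Proposition~\ref{prop:BLS1a}) is that $E_0 = \min_a E_0(a) = E_0((\pm d_{max},\ldots,\pm d_{max}))$; indeed the dimer configuration $\omega^{min}$ achieves exactly $\min_a E_0(a)$ because its ground state can be built by reflecting the Neumann ground state of a single corner cube across the cube faces, which is consistent precisely because of the reflection symmetry of $q$. So the chain of inequalities above, evaluated at a minimizing $\omega$, must collapse to equalities.

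Next I would extract the consequences of equality. From $\min\sigma(H_\omega) = E_0 = \min_a E_0(a)$ and the bracketing chain, every inequality is an equality: in particular $E_0(H_{\omega,\Lambda}^P) = E_0(H_{\omega,\Lambda}^N)$, giving the second assertion; and $\min_i E_0(\omega_i) = \min_a E_0(a)$ together with the fact that, under alternative (i), $a\mapsto E_0(a)$ is \emph{strictly} minimized only at the $2^d$ corners of $[-d_{max},d_{max}]^d$, forces $\omega_i$ to be one of those corners for every $i$ — this is the first assertion. Finally, equality $E_0(H_{\omega,\Lambda}^N) = \sum_i E_0(\omega_i)/|\Lambda\cap\Z^d|$ in the Neumann decoupling means the ground state eigenfunction $\psi_\omega$ of $H_{\omega,\Lambda}^N$ must itself be a direct sum of Neumann ground states on the individual cubes $\Lambda_i$, i.e.\ $\psi_\omega$ satisfies Neumann conditions on each $\partial\Lambda_i$; this is the last assertion. (Here one uses that the variational/min-max characterization gives equality in Neumann bracketing only when the minimizer does not "see" the removed couplings, together with uniqueness — up to sign — of the ground state of each $H^N_{\Lambda_i}(\omega_i)$.)

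The main obstacle is making the last step rigorous: arguing that equality in the Neumann decoupling bound forces $\psi_\omega$ to respect the Neumann conditions on the interior cube faces. The cleanest route is to note that $E_0(H^N_{\omega,\Lambda})$ is the infimum of the Rayleigh quotient $\int_\Lambda (|\nabla \psi|^2 + V_\omega|\psi|^2)\,dx / \int_\Lambda |\psi|^2\,dx$ over $H^1(\Lambda)$, while the decoupled bound is the same infimum over $\bigoplus_i H^1(\Lambda_i)$ (a strictly larger form domain, since $H^1(\Lambda)\hookrightarrow \bigoplus_i H^1(\Lambda_i)$); since $\psi_\omega \in H^1(\Lambda)$ also lies in the larger space and attains the common minimal value there, it must be a ground state of the decoupled operator $\bigoplus_i H^N_{\Lambda_i}(\omega_i)$, hence on each $\Lambda_i$ equals (a multiple of) the positive Neumann ground state $\varphi_i$ of $H^N_{\Lambda_i}(\omega_i)$. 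One then checks the multiples agree across faces (using that $\psi_\omega$ is a genuine $H^1(\Lambda)$ function and that all $\varphi_i$ are strictly positive), so $\psi_\omega$ is globally the natural gluing and automatically meets Neumann conditions on every $\partial\Lambda_i$. A secondary technical point is the bookkeeping of which cube faces are "interior" to $\Lambda$ versus on $\partial\Lambda$, but periodicity makes all faces equivalent, so no special treatment of boundary faces is needed.
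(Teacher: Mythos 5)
Your overall strategy is the same as the paper's: Floquet--Bloch gives $E_0=E_0(H^P_{\omega,\Lambda})\ge E_0(H^N_{\omega,\Lambda})$, a cube-wise Neumann decoupling pushes this down to $\min_a E_0(a)=E_0$, and the equality analysis under alternative (i) yields all three assertions. However, one step in your chain is false as written: Neumann bracketing does \emph{not} give $E_0(H^N_{\omega,\Lambda})\ge \frac{1}{|\Lambda\cap\Z^d|}\sum_i E_0(\omega_i)$. The decoupled operator $\bigoplus_i H^N_{\Lambda_i}(\omega_i)$ has ground state energy $\min_i E_0(\omega_i)$, which can lie strictly below the unweighted average (one cube with a deep well, the others empty). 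Bracketing only yields $E_0(H^N_{\omega,\Lambda})\ge\min_i E_0(\omega_i)$, and equality there pins down just \emph{one} cube; so the argument of your second paragraph does not yet show that \emph{every} $\omega_i$ sits in a corner.

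The correct replacement --- which is exactly what the paper does, and which is essentially contained in your own third paragraph --- is to expand the Rayleigh quotient of the actual ground state $\psi_\omega$ as a \emph{convex combination}, $E_0(H^N_{\omega,\Lambda})\ge\sum_i E_0(\omega_i)\,w_i$ with $w_i=\int_{\Lambda_i}|\psi_\omega|^2 / \int_{\Lambda}|\psi_\omega|^2$, and to use that all weights $w_i$ are strictly positive because the Neumann ground state of the connected domain $\Lambda$ does not vanish on any $\Lambda_i$. (In your form-domain language: $\psi_\omega$ attains the bottom of the decoupled operator, and since its restriction to each $\Lambda_i$ is nonzero, each $E_0(\omega_i)$ must equal that bottom.) With strictly positive weights, equality forces $E_0(\omega_i)=E_0$ for every $i$, whence every $\omega_i$ is a corner by alternative (i), and each restriction $\psi_\omega|_{\Lambda_i}$ is the Neumann ground state on $\Lambda_i$, which gives the Neumann conditions on every $\partial\Lambda_i$. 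The gap is therefore localized and repairable with material you already have: replace the unweighted average by the $\psi_\omega$-weighted one and make the non-vanishing of $\psi_\omega$ on each cube explicit.
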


\begin{proof}
By assumption and Floquet-Bloch theory $E_0 = \min \sigma(H_{\omega}) =E_0(H_{\omega,\Lambda}^P)$. Also, by the variational principle, $E_0(H_{\omega,\Lambda}^P) \ge E_0(H_{\omega,\Lambda}^N)$. The ground state $\psi_{\omega}$ minimizes the quadratic form of $H_{\omega,\Lambda}^N$, thus
\begin{eqnarray} \label{eq:Neumannbrack}
E_0(H_{\omega,\Lambda}^N) & = & \frac{\int_{\Lambda} |\nabla \psi_{\omega}|^2 + \int_{\Lambda} \sum_{i\in \Lambda \cap \Z^d} q(x-i-\omega_i) |\psi_{\omega}|^2}{\int_{\Lambda} |\psi_{\omega}|^2} \nonumber \\
& = & \sum_{i\in \Lambda \cap \Z^d} \frac{\int_{\Lambda_i} |\nabla \psi_{\omega}|^2 + \int_{\Lambda_i} q(x-i-\omega_i) |\psi_{\omega}|^2}{ \int_{\Lambda_i} |\psi_{\omega}|^2} \cdot \frac{\int_{\Lambda_i} |\psi_{\omega}|^2}{\int_{\Lambda} |\psi_{\omega}|^2} \nonumber \\
& \ge & \sum_{i\in \Lambda \cap \Z^d} E_0(\omega_i) \frac{\int_{\Lambda_i} |\psi_{\omega}|^2}{\int_{\Lambda} |\psi_{\omega}|^2} \ge \sum_{i\in \Lambda \cap \Z^d} E_0 \frac{\int_{\Lambda_i} |\psi_{\omega}|^2}{\int_{\Lambda} |\psi_{\omega}|^2} = E_0,
\end{eqnarray}
where $E_0(\omega_i)$ is given by (\ref{eq:mineval}). In the second to last inequality we have used the variational principle as well as the fact that $\psi_{\omega}$ does not vanish on any of the $\Lambda_i$.

We conclude that all inequalities above must indeed be equalities, which immediately gives $E_0(H_{\omega,\Lambda}^P) = E_0(H_{\omega,\Lambda}^N)$. If for at least one $\omega_i \not\in \{a:a_k\in \{-d_{max}, d_{max}\}, k=1,\ldots,d\}$, then the last inequality in (\ref{eq:Neumannbrack}) would be strict, given that we are in alternative (i). We conclude that all $\omega_i$ sit in a corner. Finally, we see that
\begin{equation} \label{eq:cubeminimizer}
\frac{\int_{\Lambda_i} |\nabla \psi_{\omega}|^2 + \int_{\Lambda_i} q(x-i-\omega_i) |\psi_{\omega}|^2}{ \int_{\Lambda_i} |\psi_{\omega}|^2} = E_0(\omega_i)
\end{equation}
for each $i$. Thus the restriction of $\psi_{\omega}$ to $\Lambda_i$ is the ground state for the Neumann problem of $-\Delta +V(x-i-\omega_i)$ on $\Lambda_i$ and thus satisfies Neumann boundary conditions on each $\Lambda_i$.
\end{proof}

We now consider $d=1$, still under alternative (i), where we use one more lemma to prepare for the proof of Theorem~\ref{thm:thm2}.

\begin{lemma} \label{lem:lem2}
Let $d=1$ and $H_{\Lambda_0}^N(a)$ be the restriction of $-d^2/dx^2+q(x-a)$ to $L^2(-\frac{1}{2},\frac{1}{2})$ with Neumann boundary conditions. Let $\psi$ be the positive normalized ground state of $H_{\Lambda_0}^N(d_{max})$. Then $\psi(\frac{1}{2}) \not= \psi(-\frac{1}{2})$.
\end{lemma}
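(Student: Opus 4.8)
The plan is to argue by contradiction: suppose $\psi(\tfrac12) = \psi(-\tfrac12)$. Since $\psi$ is the positive normalized ground state of $H_{\Lambda_0}^N(d_{max})$, it solves $-\psi'' + q(x-d_{max})\psi = E_0(d_{max})\psi$ on $(-\tfrac12,\tfrac12)$ with $\psi'(\pm\tfrac12)=0$. Recall that $\mathrm{supp}\,q \subset [-r,r]$ with $r + d_{max} = \tfrac12$, so the shifted potential $q(\cdot - d_{max})$ is supported in $[d_{max}-r, d_{max}+r] = [d_{max}-r, \tfrac12]$, i.e.\ it touches the right endpoint $x=\tfrac12$ but leaves a genuine interval $(-\tfrac12, d_{max}-r)$ to the left on which $\psi$ is free. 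On that left interval $\psi'' = -E_0(d_{max})\psi$; combined with the Neumann condition $\psi'(-\tfrac12)=0$ this pins down $\psi$ up to a multiplicative constant as $\psi(x) = c\cos(\sqrt{E_0(d_{max})}\,(x+\tfrac12))$ if $E_0(d_{max})>0$ (or an affine function if $E_0(d_{max})=0$; but under alternative (i), $E_0(d_{max})<0$ in the sign-indefinite subcase is also possible, so one should really write $\psi(x) = c\cosh$ or $c\cos$ or affine according to the sign of $E_0(d_{max})$, and note that on $[d_{max}-r,d_{max}]$ the potential is symmetric about $d_{max}$ but $\psi$ need not be).

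The key idea is a reflection/comparison argument. Let $\tilde\psi$ denote the positive normalized ground state of the Neumann problem for $-d^2/dx^2 + q(x+d_{max})$ on $(-\tfrac12,\tfrac12)$, i.e.\ the single site sitting at the extreme left; by reflection symmetry of $q$ one has $\tilde\psi(x) = \psi(-x)$ and the same ground state energy $E_0(-d_{max}) = E_0(d_{max})$. Now I would glue $\psi$ on the cell $\Lambda_0$ to $\tilde\psi$ on the adjacent cell $(\tfrac12,\tfrac32)$ (translated appropriately): the gluing point is $x=\tfrac12$, where the right cell carries the reflected function whose value there equals $\psi(-\tfrac12)$ and whose derivative there equals $-\psi'(-\tfrac12)=0$. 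If $\psi(\tfrac12)=\psi(-\tfrac12)$, then the values match at $x=\tfrac12$ and both one-sided derivatives vanish there (Neumann on both sides), so the glued function is $C^1$ across $x=\tfrac12$; iterating, one builds a bounded, strictly positive, periodic weak solution of $-u'' + V u = E_0 u$ on all of $\R$ for the dimer-type potential $V$. This would force $E_0$ to be an eigenvalue (bottom of a band with a periodic eigenfunction), which is not the obstruction — rather, the point is to derive a contradiction with the strictness in alternative (i) or with Lemma~\ref{lem:lem1}.

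More precisely, the cleanest route is: the hypothesis $\psi(\tfrac12)=\psi(-\tfrac12)$ together with $\psi'(\pm\tfrac12)=0$ means the odd reflection trick also works, so one can glue $\psi$ to a \emph{translate of itself} (not the reflection) across $x=\tfrac12$ and again get a $C^1$ periodic positive solution — but this builds a minimizing periodic configuration in which \emph{all} sites sit at $+d_{max}$, i.e.\ $\omega_i \equiv d_{max}$, a non-dimer configuration with $\min\sigma(H_\omega)=E_0$. This contradicts Lemma~\ref{lem:lem1} only if we already know the dimer is forced in $d=1$ — which we don't yet; so instead I would contradict Proposition~\ref{prop:BLS1b}(i) directly: a constant-shift configuration has $V_\omega(x) = \sum_i q(x-i-d_{max})$, which is just a translate of the periodic potential $\sum_i q(x-i)$, whose Neumann ground state energy on a cell is $E_0(0)$, and alternative (i) says $E_0(0) > E_0(d_{max}) = E_0$ strictly. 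That is the contradiction. The main obstacle is making the gluing rigorous across the overlap-free gaps: one must check that the piecewise-defined function is a genuine $H^1_{loc}$ weak solution (continuity of value and matching of one-sided derivatives at every half-integer suffices, since the potential is locally bounded and vanishes near those points), and one must carefully track the sign of $E_0(d_{max})$ so that the free-region ODE solution is written correctly in all three regimes ($>0$, $=0$, $<0$) permitted under alternative (i). Handling the possibly negative $E_0$ case, where "positive ground state" still holds but the explicit $\cos$ becomes $\cosh$, is where the argument needs the most care.
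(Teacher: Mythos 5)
Your final argument is correct, but it reaches the contradiction by a genuinely different route than the paper. Both proofs start the same way: $\psi(\frac12)=\psi(-\frac12)$ together with $\psi'(\pm\frac12)=0$ means $\psi$ is also the (positive, hence lowest) \emph{periodic} eigenfunction of $-d^2/dx^2+q(x-d_{max})$ on the cell. The paper then stays inside the cell: it uses the reflection symmetry of $q(\cdot-d_{max})$ on the torus and uniqueness of the periodic ground state to force $\psi'(-\frac12+d_{max})=0$, and a convexity/concavity argument on the potential-free interval $[-\frac12,-\frac12+d_{max}]$ to conclude $E_0(d_{max})=0$ with $\psi$ constant off $\mathrm{supp}\,q(\cdot-d_{max})$, i.e.\ alternative (ii) — contradiction. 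You instead extend $\psi$ periodically to a positive bounded solution for the constant configuration $\omega_i\equiv d_{max}$, so $E_0(d_{max})=\min\sigma(H_{\omega\equiv d_{max}})$; since that whole-line operator is a translate of the one with centered sites, Neumann bracketing gives $\min\sigma\ge E_0(0)$, contradicting the strict maximum of $E_0(\cdot)$ at $a=0$ from alternative (i). Your version avoids the antipodal-point symmetry step and any case analysis on the sign of $E_0(d_{max})$ (your worries about $\cos$ versus $\cosh$ are moot in the final argument), at the cost of invoking the whole-line operator and the bracketing inequality $\min\sigma(H_{\omega\equiv 0})\ge E_0(0)$ — which is exactly the computation (\ref{eq:Neumannbrack}) already in the paper, so nothing new is needed. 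One phrase to tighten: the Neumann eigenvalue on a fixed cell is \emph{not} translation invariant (that is the whole point of $E_0(a)$ depending on $a$), so you should not say the translated potential ``has Neumann ground state energy $E_0(0)$''; rather, translate the full-line operator first (unitary equivalence, spectrum unchanged) and only then bracket with Neumann conditions on the cells centered at the lattice points. The gluing itself is unproblematic: values agree by hypothesis and both one-sided derivatives vanish at each half-integer, so the periodic extension is a genuine distributional solution.
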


\begin{proof}
Suppose that $\psi(\frac{1}{2}) = \psi(-\frac{1}{2})$, then $\psi$ coincides with the periodic ground state of $-d^2/dx^2+q(x-d_{max})$ on $L^2(-\frac{1}{2},\frac{1}{2})$. Due to the symmetry of $q$ and the uniqueness of the periodic ground state, it must therefore also satisfy $\psi'(-\frac{1}{2}+d_{max})=0$ (considering $(-\frac{1}{2},\frac{1}{2})$ as a $1$-torus, $-\frac{1}{2}+d_{max}$ lies opposite to $d_{max}$). As $q(x)=0$ for $x\in (-\frac{1}{2}, -\frac{1}{2}+d_{max})$, we have that $-\psi''= E_0(d_{max})\psi$ on $[-\frac{1}{2}, -\frac{1}{2}+d_{max}]$ and
\begin{equation} \label{eq:psiprime}
\psi'(-\frac{1}{2}) = \psi'(-\frac{1}{2}+d_{max})=0.
\end{equation}
If $E_0(d_{max})<0$, then $\psi''=-E_0(d_{max})\psi >0$ and thus $\psi$ is strictly convex on $[-1/2,-1/2+d_{max}]$, contradicting (\ref{eq:psiprime}). Similarly, $E_0(d_{max})>0$ would yield strict concavity of $\psi$, again contradicting (\ref{eq:psiprime}). Thus $E_0(d_{max})=0$ and it follows from (\ref{eq:psiprime}) that $\psi$ must be constant outside the support of $q(x-d_{max})$. This contradicts that we have assumed alternative (i).
\end{proof}

We can now complete the proof of Theorem~\ref{thm:thm2}: Let $\omega$ be an $L$-periodic configuration which satisfies $\min \sigma(H_{\omega})=E_0$. By Lemma~\ref{lem:lem1} we have $\omega_i \in \{\pm d_{max}\}$ for all $i=1,\ldots,L$ and also $E_0 = \min \sigma(H_{\omega,\Lambda}^P) = \min \sigma(H_{\omega,\Lambda}^N)$, where $\Lambda = (\frac{1}{2},L+\frac{1}{2})$.

Let $u_D$ and $u_N$ be the solutions of $-u''+Vu=E_0u$ which satisfy $u_D(\frac{1}{2})=0$, $u_D'(\frac{1}{2})=1$, $u_N(\frac{1}{2})=1$, $u_N'(\frac{1}{2})=0$. $E_0$ is a Neumann eigenvalue on $\Lambda$ and thus $u_N'(L+\frac{1}{2})=0$. The transfer matrix for $H_{\omega}$ at $E_0$ from $1/2$ to $L+1/2$ is given by
\begin{equation}
T= \left( \begin{array}{cc} u_N(L+\frac{1}{2}) & u_D(L+\frac{1}{2}) \\ u_N'(L+\frac{1}{2}) & u_D'(L+\frac{1}{2}) \end{array} \right).
\end{equation}
This implies
\begin{equation} \label{eq:transferdet}
1 = \:\mbox{det}\,T = u_N(L+1/2) u_D'(L+1/2).
\end{equation}
Moreover, as $E_0$ is also an eigenvalue for periodic boundary conditions,
\begin{equation} \label{eq:transfertrace}
2 = \:\mbox{tr}\,T = u_N(L+1/2) + u_D'(L+1/2).
\end{equation}
We conclude from (\ref{eq:transferdet}) and (\ref{eq:transfertrace}) that
\begin{equation} \label{eq:neuper}
u_N(L+1/2) = 1 = u_D'(1/2),
\end{equation}
meaning that $u_N$ is both the Neumann and periodic eigenfunction to $E_0$.

We can use Lemma~\ref{lem:lem2} to understand the detailed structure of $u_N$: Let $\psi$ be the normalized ground state of $H_{\Lambda_0}^N(d_{max})$ as given there. Then, by symmetry of $q$, the normalized ground state of $H_{\Lambda_0}^N(-d_{max})$ is given by $\tilde{\psi}(x)= \psi(-x)$. As $\omega_i \in \{\pm d_{max}\}$ for all $i=1,\ldots, L$, we can construct $u_N$ by concatenating suitably re-scaled versions of $\psi$ and $\tilde{\psi}$, respectively, on the intervals $[i-\frac{1}{2}, i+\frac{1}{2}]$. With the positive number $r= \psi(\frac{1}{2})/\psi(-\frac{1}{2}) \not= 1$ from Lemma~\ref{lem:lem2} we thus have
\begin{equation} \label{eq:hopper}
u_N(i+1/2)/u_N(i-1/2) = \left\{ \begin{array}{ll} r & \mbox{if $\omega_i = d_{max}$}, \\ \frac{1}{r} & \mbox{if $\omega_i = -d_{max}$}. \end{array} \right.
\end{equation}
The accumulative effect is that $u_N(L+\frac{1}{2}) = r^{n_+(\omega)-n_-(\omega)} u_N(\frac{1}{2})$. We conclude from (\ref{eq:neuper}) that $n_+(\omega)=n_-(\omega)$, in particular that $L$ is even, which completes the proof of Theorem~\ref{thm:thm2}.

\vspace{.2in}

We now start with preparations for the proof of Theorem~\ref{thm:thm1}(b).

\begin{lemma} \label{lem:lem3}
Consider a connected open region $D$ in $\R^d$, $d\ge 2$ and a hyperplane $P$ that divides this region into
two nonempty subregions. Denote by $\sigma$ the reflection about $P$ and assume that $D\cap \sigma(D)$ is connected. Let $E\in \R$ and, in $D$, let $u$ be a solution of the equation
\begin{equation} \label{schroequ}
-\Delta u = E u
\end{equation}
which satisfies the condition $ \frac{\partial u}{\partial n} = 0$ on  $P\cap D $.
Then $u$ can be extended to a symmetric function $w$ on $D \cup \sigma (D)$ which satisfies the equation $-\Delta u = E u$ in this region.
\end{lemma}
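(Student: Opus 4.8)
The plan is to take $w$ to be the obvious even reflection of $u$ and to reduce everything to showing that $u$ is \emph{already} symmetric on the overlap $D\cap\sigma(D)$. Set $\Omega:=D\cup\sigma(D)$ and define $w(x):=u(x)$ for $x\in D$ and $w(x):=u(\sigma(x))$ for $x\in\sigma(D)$. Since $\sigma$ is an affine isometry, $-\Delta(u\circ\sigma)=(-\Delta u)\circ\sigma=E\,(u\circ\sigma)$, so each formula solves the equation on its (open) domain; granting that $w$ is well defined, it is then automatically symmetric ($w\circ\sigma=w$, using $\sigma^2=\mathrm{id}$), extends $u$, and solves $-\Delta w=Ew$ on all of $\Omega$ because $\{D,\sigma(D)\}$ is an open cover. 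So the whole content is the consistency of the two formulas on $D\cap\sigma(D)$, i.e.\ $u(x)=u(\sigma(x))$ there. Observe that $D\cap\sigma(D)$ is open and $\sigma$-invariant and contains $P\cap D$: for $x\in P\cap D$ one has $\sigma(x)=x\in D$, hence $x\in\sigma(D)$. In particular $P\cap D$, which is nonempty since $P$ separates $D$, lies in the interior of $D\cap\sigma(D)$.

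To prove $u=u\circ\sigma$ on $D\cap\sigma(D)$, put $v:=u-u\circ\sigma$ on this connected open set; then $-\Delta v=Ev$ there, and by elliptic regularity $u$ and $v$ are real-analytic, so the condition $\partial u/\partial n=0$ on the interior hypersurface piece $P\cap D$ is classical. Since $\sigma$ fixes $P$ pointwise, $v\equiv0$ on $P\cap D$. Working in normal coordinates across $P$ one checks that reflection flips the normal derivative, so $\partial(u\circ\sigma)/\partial n=-(\partial u/\partial n)\circ\sigma=0$ on $P\cap D$ as well, whence $\partial v/\partial n=0$ on $P\cap D$. Thus $v$ has vanishing Cauchy data on the non-characteristic hyperplane piece $P\cap D$, and Holmgren's uniqueness theorem (the operator $-\Delta-E$ has constant, hence analytic, coefficients) forces $v\equiv0$ in a neighborhood of $P\cap D$ inside $D\cap\sigma(D)$; since $v$ is real-analytic and $D\cap\sigma(D)$ is connected, $v\equiv0$ on the whole overlap. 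If one prefers to avoid citing Holmgren: $v$ is odd under $\sigma$, so in normal coordinates $v(x',x_d)=\sum_{k\ge0}a_k(x')\,x_d^{2k+1}$; the Neumann hypothesis kills $a_0$, and differentiating the relation $\partial_d^2 v=-\Delta_{x'}v-Ev$ (with $\Delta_{x'}$ the Laplacian in the tangential variables) repeatedly in $x_d$ and restricting to $x_d=0$ kills all $a_k$ by induction, so $v$ vanishes to infinite order along $P\cap D$ and hence identically near it by analyticity.

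Once $u=u\circ\sigma$ on $D\cap\sigma(D)$ is established, $w$ is well defined on $\Omega$, and the bookkeeping above completes the proof.

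The step I expect to be the main obstacle is the middle one: recognizing that $v$ carries \emph{both} zero Dirichlet and zero Neumann data on $P\cap D$ — the Neumann hypothesis on $u$ is exactly what supplies the Neumann part for $v$ — and then invoking unique continuation correctly across this interior hypersurface. It is also here that the hypothesis ``$D\cap\sigma(D)$ connected'' is genuinely used, namely to promote the local vanishing of $v$ near $P$ to vanishing on the entire overlap; a component of the overlap not meeting $P$ could otherwise obstruct well-definedness of $w$. Everything else is routine.
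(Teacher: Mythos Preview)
Your proof is correct and follows essentially the same route as the paper: define $w$ by even reflection, reduce well-definedness to showing $v=u-u\circ\sigma$ vanishes on the connected overlap $D\cap\sigma(D)$, and obtain this by proving $v$ vanishes to infinite order along $P\cap D$ (via the inductive use of $-\Delta v=Ev$) together with real-analyticity. The only cosmetic difference is that you offer Holmgren's theorem as a shortcut, whereas the paper carries out the normal-derivative induction directly.
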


\begin{proof}
Pick a point $x_0 \in P\cap D$, which we may assume to be the origin. Pick
a ball $B \subset D$ centered at the origin and pick coordinates $x_1, \dots, x_n$ so that $x'=(x_1, \dots, x_{n-1})$ are coordinates in $P$ and $x_n$ is the coordinate normal to $P$.
Consider the function
$$
v(x',x_n) = u(x',x_n) - u(x',-x_n)
$$
which satisfies  (\ref{schroequ}) and vanishes
on  $B \cap P$ identically. Its first normal derivative satisfies
\begin{equation}\label{first}
\frac{\partial v}{\partial x_n} (x',0) = 0 \ .
\end{equation}
Since
\begin{equation}
0 = E v(x',0) = -\Delta v(x',0) = -\frac{\partial^2 v}{\partial x_n^2} (x',0)
\end{equation}
we also have that
\begin{equation} \label{second}
\frac{\partial^2 v }{\partial x_n^2}(x',0) = 0 \ .
\end{equation}
Further, since
\begin{eqnarray}
&0& = E \left(\frac{\partial}{\partial x_n}v\right)(x',0) = -\left(\Delta \frac{\partial}{\partial x_n}v\right)(x',0) \\
&=&-\left(\frac{\partial^3}{\partial x_n^3}v\right)(x',0)
-\left(\Delta' \frac{\partial}{\partial x_n}v \right)(x',0)
\end{eqnarray}
we obtain from (\ref{first}) that
\begin{equation}\label{third}
\frac{\partial^3 v}{\partial x_n^3}(x',0) = 0 \ .
\end{equation}
Continuing in this fashion we deduce that
\begin{equation}
\frac{\partial^k v}{\partial x_n^k}(x',0)=0 \ ,
\end{equation}
for $k=0,1,2, \dots$.
In particular, this implies that all derivatives of $v$ vanish
at the origin. Since $v$ solves (\ref{schroequ})  it is a real
analytic function and hence it vanishes in the ball $B$. Since $x_0 \in D\cap \sigma(D)$ and $D\cap \sigma(D)$ is connected we learn that $v$ vanishes
everywhere in $D \cap \sigma(D)$. Thus $u$ is symmetric
in $D \cap \sigma(D)$ with respect to reflection about the plane $P$.
Next we prolong $u$ to the complement of $D \cap \sigma(D)$
in $D \cup \sigma(D)$ by setting
\begin{equation}
w(x) = \left\{
\begin{array}{r@{\quad:\quad}l}
u(x)  & x \in D,  \\
u(\sigma(x)) & x \in   \sigma(D).
\end{array}\right.
\end{equation}
Note that this function is defined since
$u(x)$ and $u(\sigma(x))$ coincide on $D \cap \sigma(D)$. Moreover, $x \in D\cup\sigma(D)$
means that $x \in D$ or $x \in \sigma(D)$ or both. In any case, by assumption and the fact that the Laplace operator commutes with reflections,  $w(x)$ satisfies the equation (\ref{schroequ}) at this point which proves the lemma.
\end{proof}

Given the previous lemma, we can now complete the proof of Theorem~\ref{thm:thm1}(b). Suppose that $d\ge 2$ and $\omega$ is a periodic configuration with $\min \sigma(H_{\omega})=E_0$, but not a translate of $\omega^{min}$. Then there must be two adjacent unit cubes, say $\Lambda$ and $\Lambda'$, such that the potential on the union $R=\Lambda \cup \Lambda'$ of these cubes is not symmetric with respect to reflection about their common face. This common face defines a hyperplane $P$. By Lemma~\ref{lem:lem1}, the two potential sites of $V_{\omega}$ restricted to $R$ are supported in corners of $\Lambda$ and $\Lambda'$, respectively, in the sense that $[a_1-r,a_1+r] \times \ldots \times [a_d-r,a_d+r]$ sits in a corner of $\Lambda$, and similar for $a'$ and $\Lambda'$. Also, the positive ground state eigenfunction $\psi_{\omega}$ of $H_{\omega}$ satisfies Neumann conditions on $\partial R$ as well as on $P$. Let
\begin{equation} \label{eq:reduceddomain}
D = R \setminus ([a_1-r,a_1+r]\times \ldots \times [a_d-r,a_d+r] \cup [a_1'-r,a_1'+r]\times \ldots \times [a_d'-r, a_d'+r])
\end{equation}
and $u$ the restriction of $\psi_{\omega}$ to $D$. For these choices of $D$ and $u$ and $E=E_0$ we can apply Lemma~\ref{lem:lem3}. In particular, the assumption $r<1/4$ assures the required connectedness of $D$ and $D\cap \sigma(D)$. Non-symmetry of $a$ and $a'$ and again $r<1/4$ implies that $D\cup \sigma(D)$ is all of $R$. Thus, by Lemma~\ref{lem:lem3}, $u$ can be extended to a symmetric function $w$ on $R$ which satisfies $-\Delta w = E_0 w$ on $R$. Therefore it is the ground state for the Neumann problem of $-\Delta$ on $R$. This implies $E_0=0$ and that $w$ is constant, a contradiction to the assumption of alternative (i).

\section{Consequences for the integrated density of states} \label{sec4}

We now consider the {\it random} displacement model, i.e.\ the model (\ref{eq:hamiltonian}), (\ref{eq:potential}) for the case where $\omega = (\omega_i)_{i\in\Z^d}$ is an array of i.i.d.\ vector-valued random variables with common distribution $\mu$ supported on $[-d_{max},d_{max}]^d$. Here, as usual, we define
\[
\mbox{supp}\,\mu := \{a\in \R^d: \mu(\{x:|x-a|<\varepsilon\})>0\:\mbox{for all}\: \varepsilon>0\}.
\]
Then the random operator $H_{\omega}$ is ergodic with respect to translations in $\Z^d$ and thus has all the basic properties of ergodic operators, see e.g.\ \cite{Carmona/Lacroix}. In particular, the integrated density of states (IDS)
\begin{equation} \label{eq:ids}
N(E) = \lim_{L\to\infty} \frac{1}{|\Lambda_L|} \E(\mbox{tr}\, \chi_{(-\infty,E]}(H_{\omega,\Lambda_L}^X))
\end{equation}
exists for all energies $E\in\R$. Here $\Lambda_L = (\frac{1}{2},L+\frac{1}{2})^d$ and $H_{\omega,\Lambda_L}^X$ is the restriction of $H_{\omega}$ to $L^2(\Lambda_L)$ with boundary condition $X\in \{P,N,D\}$, as periodic (P), Neumann (N) and Dirichlet (D) boundary conditions all give the same limit in (\ref{eq:ids}).

The spectrum $\sigma(H_{\omega})$ is almost surely deterministic, i.e.\ $\Sigma = \sigma(H_{\omega})$ for almost every $\omega$, and given by the growth points of the non-decreasing function $N(E)$. It can be characterized in terms of the spectra of those $H_{\omega}$ for which the configuration $\omega$ is periodic,
\begin{equation} \label{eq:periodsupp}
\Sigma = \overline{\bigcup_{\omega} \sigma(H_{\omega})},
\end{equation}
where the union is taken over all periodic $\omega$ such that $\omega_i \in$ supp$\,\mu$ for all $i$.
This corresponds to a well-known result for the Anderson model, e.g.\ \cite{Carmona/Lacroix}, and is found with the same proof. If the support of the distribution $\mu$ contains all the corners $\{(\pm d_{max}, \ldots, \pm d_{max})\}$ of the cube $[-d_{max},d_{max}]^d$, then it follows from (\ref{eq:periodsupp}) and Proposition~\ref{prop:BLS1a} that
\[ \min \Sigma = E_0 = \min \sigma(H_{\omega^{min}}). \]

For large classes of random Schr\"odinger operators it is known that the IDS vanishes rapidly at the bottom of the spectrum $E_0$, for example one has Lifshits tail behavior
\begin{equation} \label{eq:lifshits}
N(E) \sim e^{-c|E-E_0|^{-d/2}}
\end{equation}
for Anderson models with sign-definite single-site potential, see e.g.\ \cite{Kirsch}, \cite{Pastur/Figotin}0-ppp or \cite{Stollmann} for proper statements, proofs and references to the original literature.

It turns out that for the random displacement model the behavior of the IDS at the bottom of the spectrum is much more subtle. Here we will present several results for the one-dimensional displacement model, which were obtained in \cite{Baker}. We will generally assume that alternative (i) holds.

In the first result we will consider the one-dimensional Bernoulli displacement model, i.e.\ the case where the distribution of the displacements $\omega_i$ is given by
\begin{equation} \label{eq:bernoulli}
\mu = \frac{1}{2}\delta_{d_{max}} + \frac{1}{2} \delta_{-d_{max}}.
\end{equation}
It turns out that in this case the low-energy asymptotics of the IDS is at the opposite extreme of Lifshits tails:

\begin{theorem} \label{thm:bernoulli}
Let $H_{\omega}$ be the one-dimensional {\it symmetric Bernoulli} displacement model given by (\ref{eq:hamiltonian}), (\ref{eq:potential}) and (\ref{eq:bernoulli}) and assume that alternative (i) holds. Then there exist $C>0$ and $\varepsilon>0$ such that
\begin{equation} \label{eq:logbound}
N(E) \ge \frac{C}{\log^2(E-E_0)}
\end{equation}
for $E\in (E_0,E_0+\varepsilon)$.
\end{theorem}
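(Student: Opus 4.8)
The plan is to combine Dirichlet decoupling for the IDS with a variational estimate built from the transfer‑matrix picture already used to prove Theorem~\ref{thm:thm2}. Since the $\omega_i$ are now i.i.d.\ and $\pm d_{max}$‑valued, set $\epsilon_i=+1$ if $\omega_i=d_{max}$ and $\epsilon_i=-1$ otherwise, and let $s_n=\sum_{i=1}^n\epsilon_i$ (with $s_0=0$) be the associated simple random walk. By superadditivity of $E\mapsto\E\big(\mbox{tr}\,\chi_{(-\infty,E]}(H^D_{\omega,\Lambda_L})\big)$ under Dirichlet decoupling of $\Lambda_{NL}$ into translates of $\Lambda_L$, one obtains for every fixed $L\in\N$
\[
N(E)\ \ge\ \frac1L\,\PP\big(E_0(H^D_{\omega,\Lambda_L})\le E\big),
\]
where $\Lambda_L=(\tfrac12,L+\tfrac12)$ and $E_0(H^D_{\omega,\Lambda_L})$ is the lowest Dirichlet eigenvalue. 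It therefore suffices to produce $c>0$ and $p>0$ such that, for all large $L$,
\[
\PP\big(E_0(H^D_{\omega,\Lambda_L})\le E_0+e^{-c\sqrt L}\big)\ \ge\ p .
\]

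To bound $E_0(H^D_{\omega,\Lambda_L})$ from above I would test with $\eta\,u_N$. Here $u_N$ is the solution of $-u''+V_\omega u=E_0u$ on $\Lambda_L$ obtained by concatenating, cell by cell, suitably normalized copies of the positive ground states $\psi$ and $\widetilde\psi(x)=\psi(-x)$ of $H^N_{\Lambda_0}(\pm d_{max})$, exactly as in the derivation of (\ref{eq:hopper}); since $\psi'(\pm\tfrac12)=0$, this $u_N$ is a positive Neumann eigenfunction of $H^N_{\omega,\Lambda_L}$ with eigenvalue $E_0$, and $u_N(n+\tfrac12)=r^{s_n}u_N(\tfrac12)$ with $r=\psi(\tfrac12)/\psi(-\tfrac12)\ne1$ by Lemma~\ref{lem:lem2}; by the symmetry of the Bernoulli law we may assume $r>1$. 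Let $\eta\in H_0^1(\Lambda_L)$ be equal to $1$ away from the first and last few cells and piecewise linear in the transition cells, so that $\int|\eta'|^2u_N^2\le \const\,(1+r^{2s_L})\,u_N(\tfrac12)^2$ (the mass of $u_N^2$ over those boundary cells), while $\int\eta^2u_N^2\ge \const\,u_N(\tfrac12)^2\sum_n r^{2s_n}\ge \const\,u_N(\tfrac12)^2\,r^{2\max_{0\le n\le L}s_n}$. Inserting the identity $\int|\nabla(\eta u_N)|^2+V_\omega|\eta u_N|^2=E_0\int\eta^2u_N^2+\int|\eta'|^2u_N^2$ into the Rayleigh quotient then gives
\[
E_0(H^D_{\omega,\Lambda_L})-E_0\ \le\ \const\,\frac{1+r^{2s_L}}{r^{2\max_{0\le n\le L}s_n}}\ \le\ \const\,r^{-2(\max_{0\le n\le L}s_n-s_L^+)},\qquad s_L^+:=\max(s_L,0).
\]

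It remains to show $\PP(G_L)\ge p>0$, where $G_L:=\{\max_{0\le n\le L}s_n-s_L^+\ge K\sqrt L\}$, for a fixed small $K$ and all large $L$; on $G_L$ the last display gives $E_0(H^D_{\omega,\Lambda_L})\le E_0+\const\,r^{-2K\sqrt L}=E_0+\const\,e^{-2(\log r)K\sqrt L}$. For the probabilistic claim I would use time reversal: $\max_{0\le n\le L}(s_n-s_L)$ has the same law as $\max_{0\le n\le L}s_n$, which by the reflection principle and the central limit theorem exceeds $K\sqrt L$ with probability tending to $2\bar\Phi(K)$; discarding the event $\{s_L<0,\ \max_n s_n<K\sqrt L\}$, on which $G_L$ can fail, costs at most $\PP(\max_n s_n<K\sqrt L)\to 1-2\bar\Phi(K)$, so $\PP(G_L)\to 4\bar\Phi(K)-1$, which is positive once $K$ is small. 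Finally, given $\delta\in(0,\varepsilon)$ let $L(\delta)$ be the smallest integer with $\const\,e^{-2(\log r)K\sqrt L}\le\delta$, so that $L(\delta)\le\const'(\log\delta)^2$; then $\PP\big(E_0(H^D_{\omega,\Lambda_{L(\delta)}})\le E_0+\delta\big)\ge p$, hence
\[
N(E_0+\delta)\ \ge\ \frac{p}{L(\delta)}\ \ge\ \frac{C}{\log^2\delta},
\]
which is (\ref{eq:logbound}) with $E=E_0+\delta$.

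The step I expect to be the main obstacle is controlling the competition of scales in the second paragraph: the Dirichlet penalty is of order $u_N$ at the endpoints, i.e.\ of order $1+r^{2s_L}$, whereas the normalizing mass is governed by the exponentially large factor $r^{2\max_n s_n}$, and one must verify that the combinatorics of a simple random walk of length $L$ — which climbs to height of order $\sqrt L$ above its endpoint with probability bounded below — converts this into a gain of order $e^{-c\sqrt L}$; it is precisely the resulting $\sqrt L$‑versus‑$\log(1/\delta)$ relation that produces the unusually slow $1/\log^2$ decay. Keeping $\PP(G_L)$ bounded away from $0$ — rather than of order $L^{-1/2}$, which would only give a $1/\log^3$ bound — is why $G_L$ must be phrased through $\max_n s_n-s_L^+$ and handled by time reversal instead of by conditioning on a random‑walk bridge.
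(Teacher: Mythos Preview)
Your proposal is correct and follows essentially the same route as the paper: the Dirichlet lower bound $N(E)\ge L^{-1}\PP(E_0(H^D_{\omega,\Lambda_L})<E)$, a test function given by a smooth cutoff times the cell-by-cell Neumann eigenfunction $u_N$, the identification $u_N(n+\tfrac12)=r^{s_n}$ with a simple random walk $(s_n)$, and a reflection-principle/CLT argument showing that with probability bounded below the random walk climbs $\sim\sqrt{L}$ above its endpoint, which converts $L\sim(\log\delta)^2$ into the $1/\log^2$ bound. The only cosmetic differences are that the paper conditions on $\{S_L\le 0,\ \max_n S_n\ge\sqrt{L}\}$ via the conditional reflection principle, whereas you phrase the good event as $\{\max_n s_n-s_L^+\ge K\sqrt{L}\}$ and invoke time reversal; and you use the clean ground-state identity $\int|(\eta u_N)'|^2+V_\omega|\eta u_N|^2=E_0\int\eta^2u_N^2+\int|\eta'|^2u_N^2$ in place of the paper's two-term estimate---both yield the same numerator $\const(1+r^{2s_L})$.
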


As $N(E_0)=0$, this means that $N(E)$ has infinite upper derivative at $E=E_0$, i.e.\ the density of states $n(E)=N'(E)$ has a strong singularity at the bottom of the spectrum. This is opposed to the case of Lifshits tails which would yield $n(E_0)=0$. In fact, (\ref{eq:logbound}) says that the IDS is not even H\"older continuous at $E=E_0$, an even stronger singularity than one gets for the Laplacian $H_0 = -d^2/dx^2$, where the IDS has a van Hove singularity $C|E|^{1/2}$. For general one-dimensional ergodic Schr\"odinger operators (and for discrete ergodic Schr\"odinger operators also in higher dimension) the IDS is log-H\"older-continuous at all energies, i.e.\
\begin{equation} \label{eq:logholder}
|N(E)-N(E')| \le \frac{C}{|\log |E-E'||}
\end{equation}
for $E$ close to $E'$, see \cite{Craig/Simon, Craig/Simon2}. Craig and Simon constructed examples of quasi-periodic potentials which show that the bound (\ref{eq:logholder}) is optimal. As far as we know, the result in Theorem~\ref{thm:bernoulli} provides the first known example of a random potential (with finite correlation length) where for at least one energy the IDS is not H\"older-continuous and, in fact, close to the minimal possible regularity for ergodic operators given by (\ref{eq:logholder}).

\begin{proof}
We will use the standard lower bound, e.g.\ \cite{Carmona/Lacroix},
\begin{equation} \label{eq:IDSlow}
N(E) \ge \frac{1}{L} \PP(E_0(H_{\omega,L}^D)<E),
\end{equation}
which holds for arbitrary $L$, to be chosen later depending on $E$. Here $H_{\omega,L}^D$ is short for $H_{\omega,\Lambda_L}^D$, $\Lambda_L=(1/2,L+1/2)$.

To show that $E_0(H_{\omega,L}^D)<E$ we will find $\psi_{\omega} \in D(H_{\omega,L}^D)$ with $\|\psi_{\omega}\|=1$ and $\langle \psi_{\omega}, H_{\omega,L}^D \psi_{\omega} \rangle <E$. To construct $\psi_{\omega}$, let displacements $\omega = (\omega_1, \ldots, \omega_L)$ be given and let $u_N$ be the solution of $-u''+V_{\omega}u=E_0 u$ with $u_N(\frac{1}{2})=1$, $u_N'(\frac{1}{2})=0$. Choose cut-off functions $\theta_L \in C_0^{\infty}(\R)$ with $0\le \theta_L \le 1$, supp$\,\theta_L \subset [1,L]$, $\theta_L(x)=1$ for $3/2\le x\le L-1/2$, and $\|\theta_L'\|_{\infty}$ and $\|\theta_L''\|_{\infty}$ uniformly bounded in $L$.

As the $\omega_i$ have distribution (\ref{eq:bernoulli}), we have $\omega_i\in \{\pm d_{max}\}$ for all $i$. Thus the restriction of $-d^2/dx^2+V_{\omega}$ to $(i-1/2,i+1/2)$ has Neumann ground state energy $E_0$ for all $i\in\{1,\ldots, L\}$, which implies that
\begin{equation} \label{eq:zeroderiv}
u_N'(i+1/2)=0 \quad \mbox{for all $i\in \{1,\ldots,L\}$}.
\end{equation}
We choose $\psi_{\omega} := \theta_L u_N/ \|\theta_L u_N\|$ and calculate
\begin{eqnarray} \label{eq:approxef}
\langle \psi_{\omega}, H_{\omega,L}^D \psi_{\omega} \rangle -E_0 & = & \frac{\langle \theta_L u_N, -\theta_L''u_N\rangle -2\langle \theta_L u_N, \theta_L' u_N' \rangle}{\|\theta_Lu_N\|^2} \nonumber \\
& \le & \frac{\tilde{\beta} (1+u_N^2(L+1/2))}{\int_{3/2}^{L-1/2} u_N^2(x)\,dx} \nonumber \\
& \le & \frac{\beta(1+u_N^2(L+1/2))}{\sum_{i=1}^L u_N^2(i+1/2)}
\end{eqnarray}
where $\tilde{\beta}>0$ and $\beta>0$ can be chosen uniformly in $\omega$ and $L$. Here we have repeatedly used standard a priori upper and lower bounds on solutions of $-u''+Vu=Eu$, for example that $r(x) \sim r(x+1)$ and $\int_x^{x+1} u^2 \sim r^2(x)$, where $r(x)=(u^2(x)+u'^2(x))^{1/2}$ is the Pr\"ufer amplitude of $u$ and constants can be chosen uniform as long as $E$ and $\|V\|_{\infty}$ vary in a bounded interval, see e.g.\ \cite{Stolz1,Stolz2} for more details. We also use that by (\ref{eq:zeroderiv}) the Pr\"ufer amplitude of $u_N$ at the points $i+1/2$ coincides with $u_N(i+1/2)$.

Thus
\begin{equation} \label{eq:probbound}
\PP(E_0(H_{\omega,L}^D) < E) \ge \PP\left(\beta \frac{1+u_N^2(L+1/2)}{\sum_{i=1}^L u_N^2(i+1/2)} < E-E_0\right).
\end{equation}
Another consequence of $\omega_i\in \{\pm d_{max}\}$ is that $u_N$ satisfies (\ref{eq:hopper}) for every $i$ with a positive $r\not= 1$, using that we are in alternative (i). Assume without restriction that $r>1$ (if $0<r<1$ then we can do the following construction from ``right to left'', choosing $u_N(L+1/2)=1$, $u_N'(L+1/2)=0$) and set
\begin{equation} \label{eq:symmBernoulli}
X_i = \frac{\log (u_N(i+1/2)/u_N(i-1/2))}{\log r},
\end{equation}
$i=1,\ldots,L$. The $X_i$ are independent symmetric Bernoulli random variables with values $\pm 1$, and
\begin{equation} \label{eq:randomwalk}
u_N^2(i+1/2) = e^{2S_i \log r},
\end{equation}
where $S_i = X_1+\ldots +X_i$. If $Y:= \max_{i=1,\ldots,L} S_i$, then it is a consequence of the reflection principle for symmetric random walks, e.g.\ \cite{Feller} that
\begin{equation} \label{eq:reflection}
\PP(Y\ge \sqrt{L}| S_L\le 0) = \PP(S_L \ge 2\sqrt{L}).
\end{equation}
The latter converges to $\pi^{-1/2} \int_2^{\infty} \exp(-y^2/2)\,dy>0$ as $L\to\infty$ by the central limit theorem.

Let $A_L:= \{\omega| Y\ge \sqrt{L} \:\mbox{and $S_L\le 0$}\}$. If $Y\ge \sqrt{L}$, then $\sum_{i=1}^L u_N^2(L+1/2) \ge \exp(2\sqrt{L} \log r)$. Also, $S_L\ge 0$ means $u_N^2(L+1/2) \le 1$. Thus (\ref{eq:probbound}) implies
\begin{eqnarray} \label{eq:condprob}
\PP(E_0(H_{\omega,L}^D)<E) & \ge & \PP\left(\beta \frac{1+u_N^2(L+1/2)}{\sum_{i=1}^L u_N^2(i+1/2)} < E-E_0| A_L\right) \PP(A_L) \nonumber \\
& = & \PP(A_L) \ge c_0 >0
\end{eqnarray}
if $2\beta \exp(-2\sqrt{L}\log r) <E-E_0$ and $L$ sufficiently large. This determines the choice of $L\in \N$ for given $E$ such that
\begin{equation} \label{eq:Lchoice}
\frac{1}{2\beta} e^{-2\sqrt{L-1} \log r} \ge E-E_0 \ge \frac{1}{2\beta} e^{-2\sqrt{L} \log r}.
\end{equation}
Thus $L \sim \left( \frac{\log 2\beta (E-E_0)}{\log r}\right)^2$. From (\ref{eq:IDSlow}) and (\ref{eq:condprob}) we have $N(E) \ge c_0/L$, which, for $E-E_0$ sufficiently small, takes the form (\ref{eq:logbound}).

\end{proof}

As mentioned above, (\ref{eq:logbound}) says in particular that the IDS is not H\"older continuous at $E=E_0$. This is only possible if the distribution $\mu$ is concentrated in the extreme points $d_{max}$ and $-d_{max}$, as is demonstrated by our next result.

\begin{theorem} \label{thm:nonbernoulli}
Suppose that the distribution $\mu$ of the $\omega_i$ in the one-dimensional displacement model (\ref{eq:hamiltonian}), (\ref{eq:potential}) satisfies
\begin{equation} \label{eq:nonbernoulli}
\mu((-d_{max},d_{max}))>0.
\end{equation}
Then the IDS $N(E)$ is H\"older continuous at $E=E_0$.
\end{theorem}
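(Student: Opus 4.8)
The plan is to prove the upper bound $N(E)\le \const\,(E-E_0)^{\alpha}$ for some $\alpha>0$ and all $E$ in a right neighbourhood of $E_0$; since $N(E_0)=0$ and $N\equiv 0$ below $E_0$, this is precisely H\"older continuity of $N$ at $E_0$. The starting point is the Neumann bracketing upper bound
\[
N(E)\ \le\ \frac{1}{L}\,\E\bigl(\operatorname{tr}\chi_{(-\infty,E]}(H_{\omega,\Lambda_L}^N)\bigr),
\]
valid for every $L\in\N$ (see e.g.\ \cite{Carmona/Lacroix}; this is the Neumann counterpart of \eqref{eq:IDSlow}), which I would combine with two uniform facts about a single unit cell. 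First, by simplicity of the Neumann ground state, continuity in $a$ and compactness, there is $E_1>E_0$ with $\inf_{a\in[-d_{max},d_{max}]}\lambda_2(H_{\Lambda_0}^N(a))\ge E_1$, where $\lambda_2$ denotes the second eigenvalue. Second --- and this is where the hypothesis enters --- under alternative (i) one has $E_0(a)\ge E_0$ for all $a$, with $E_0(a)>E_0$ for every $a\in(-d_{max},d_{max})$ (in $d=1$ the minimum is attained only at the two endpoints, where $E_0(\pm d_{max})=E_0$), so writing $(-d_{max},d_{max})$ as the increasing union of the sets $\{a:E_0(a)\ge E_0+1/n\}$ and using $\mu((-d_{max},d_{max}))>0$ produces $\eta>0$ and $p>0$ with $\PP(E_0(\omega_0)\ge E_0+\eta)=p$. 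Call a cell $\Lambda_i$ \emph{bad} if $E_0(\omega_i)\ge E_0+\eta$.

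Decoupling $H_{\omega,\Lambda_L}^N$ into its unit cells and using $E_1$ gives, for $E<E_1$, $\operatorname{tr}\chi_{(-\infty,E]}(H_{\omega,\Lambda_L}^N)\le L$, and this trace is $0$ unless $E_0(H_{\omega,\Lambda_L}^N)\le E$; hence $N(E)\le \PP(E_0(H_{\omega,\Lambda_L}^N)\le E)$. The heart of the matter is the deterministic estimate: \emph{if the box $\Lambda_L$ contains at least one bad cell, then $E_0(H_{\omega,\Lambda_L}^N)\ge E_0+\eta\,c^{L}/L$} for a constant $c\in(0,1)$ depending only on $\|q\|_\infty$ and a fixed bounded energy window. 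To see this, I would first repeat the cell-by-cell Rayleigh quotient computation \eqref{eq:Neumannbrack} for the positive ground state $\psi_\omega$, but retaining the gain $\eta$ on the bad cells, which yields
\[
E_0(H_{\omega,\Lambda_L}^N)\ \ge\ E_0+\eta\,\frac{\sum_{i:\ \Lambda_i\text{ bad}}m_i}{\|\psi_\omega\|^2},\qquad m_i:=\int_{\Lambda_i}|\psi_\omega|^2 .
\]
Naive Neumann decoupling at the walls of a bad cell gives nothing here; instead one uses the standard one-dimensional a priori bounds for solutions of $-u''+Vu=\lambda u$ (as already used in the proof of Theorem~\ref{thm:bernoulli}; see \cite{Stolz1,Stolz2}), which give $c\le m_{i+1}/m_i\le c^{-1}$ with $c\in(0,1)$ uniform. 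Then $m_i\le c^{-|i-b|}m_b\le c^{-L}m_b$ for the bad cell $b$, so $\|\psi_\omega\|^2=\sum_i m_i\le L\,c^{-L}m_b$ and the displayed lower bound becomes $E_0(H_{\omega,\Lambda_L}^N)\ge E_0+\eta\,c^{L}/L$. Equivalently, $E_0(H_{\omega,\Lambda_L}^N)<E_0+\eta c^{L}/L$ forces all $L$ cells to be good, an event of probability $(1-p)^{L}$.

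It remains to choose the scale. Given $E\in(E_0,E_0+\varepsilon)$ with $\varepsilon\le E_1-E_0$ small, set $\delta:=E-E_0$ and pick $L=L(\delta)$ of order $\log(1/\delta)$, precisely $L=\lfloor\kappa\log(1/\delta)\rfloor$ with $\kappa>0$ chosen small enough (in terms of $c$) that $\eta c^{L}/L>\delta$ for all small $\delta$ --- this is possible because, for $\kappa\log(1/c)<1$, $\eta c^{L}/L$ is, up to logarithmic factors, a negative power of $\delta$ and thus dominates $\delta$. With this $L$,
\[
N(E)\ \le\ \PP\bigl(E_0(H_{\omega,\Lambda_L}^N)\le E\bigr)\ \le\ \PP\bigl(E_0(H_{\omega,\Lambda_L}^N)<E_0+\eta c^{L}/L\bigr)\ \le\ (1-p)^{L}\ \le\ \const\,(E-E_0)^{\alpha},
\]
with $\alpha=\kappa\,|\log(1-p)|>0$, which is the assertion. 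The step I expect to be the main obstacle is the deterministic lower bound on $E_0(H_{\omega,\Lambda_L}^N)$ in the presence of a single bad cell: one must show that, although decoupling hides the effect of a bad cell, the Neumann ground state cannot vanish at its walls and --- being a one-dimensional solution with bounded coefficients --- carries a definite fraction $c^{L}/L$ of its mass over that cell, with $c$ uniform in $\omega$ and in the (bounded) eigenvalue. This is also the point at which the situation genuinely differs from the symmetric Bernoulli case of Theorem~\ref{thm:bernoulli}, where $p=0$, no bad cells can occur, and $N(E)$ decays only like $1/\log^2(E-E_0)$.
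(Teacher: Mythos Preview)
Your proposal is correct and follows the same skeleton as the paper: the Neumann upper bound $N(E)\le C\,\PP(E_0(H_{\omega,\Lambda_L}^N)\le E)$, the cell-by-cell Rayleigh quotient estimate \eqref{eq:Neumannbrack} for the positive ground state, uniform one-dimensional a~priori solution bounds to control the mass ratios $m_i/m_j$, and a scale $L\sim \log(1/(E-E_0))$ converting an exponentially small probability into a H\"older bound.

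The one genuine difference is in the probability step. The paper keeps the full weighted sum $\sum_i (E_0(\omega_i)-E_0)\,m_i/\|\psi_\omega\|^2$, uses the a~priori bounds in the form $e^{-C_1 L}\le m_i\le e^{C_1 L}$ to reduce to the unweighted sample mean, and then invokes a large deviations estimate $\PP\bigl(L^{-1}\sum_i (E_0(\omega_i)-E_0)\le s_0\bigr)\le e^{-\gamma_0 L}$ (valid because these are bounded nonnegative i.i.d.\ variables with strictly positive mean under \eqref{eq:nonbernoulli}). You instead extract a single threshold $\eta>0$ with $p:=\PP(E_0(\omega_0)\ge E_0+\eta)>0$, observe that one bad cell already forces $E_0(H_{\omega,\Lambda_L}^N)\ge E_0+\eta c^{L}/L$, and bound by $(1-p)^{L}$. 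Your route is more elementary (no large deviations needed, only independence) and makes the mechanism --- a single off-corner displacement lifts the ground state --- quite transparent; the paper's route is a shade more systematic and would adapt directly if one wanted a sharper H\"older exponent by optimizing over $s_0$.
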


This result may not be optimal. We expect that under the conditions of Theorem~\ref{thm:nonbernoulli} one can at show that $N(E) \le C_{\alpha} |E-E_0|^{\alpha}$ near $E_0$ for arbitrary $\alpha>0$. But, as long as the distribution $\mu$ is chosen symmetric and not too small at $\pm d_{max}$, one does not get Lifshits tail decay as in (\ref{eq:lifshits}). To make this precise, define the {\it Lifshits exponent} $\gamma$ at $E_0$ by
\begin{equation} \label{eq:lifshitsexp}
\gamma = \lim_{E\downarrow E_0} \frac{\log(-\log N(E))}{\log(E-E_0)}
\end{equation}
whenever this limit exists. Note that $\gamma\le 0$. If $\gamma<0$, then it determines the asymptotics of the IDS in the sense that, up to logarithmic corrections, $N(E) \sim C_1 \exp(-C_2(E-E_0)^{\gamma})$ as $E\downarrow E_0$.

\begin{theorem} \label{thm:lifshitsexp}
Assume that the distribution $\mu$ is symmetric and satisfies
\begin{equation} \label{eq:mutails}
\mu([d_{max}-\epsilon,d_{max}] \cup [-d_{max}, -d_{max}+\epsilon]) \ge C_1\epsilon^N
\end{equation}
for some positive $C_1$ and $N$ and all $\epsilon>0$. Also assume that the single-site potential $q$ is uniformly h\"older continuous, i.e.\ that $|q(x)-q(y)| \le C_2 |x-y|^{\rho}$ for some $C_2$ and $\rho>0$ and all $x$, $y$.

Then $\gamma=0$.
\end{theorem}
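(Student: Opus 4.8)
The plan is to show $\gamma = 0$ by producing, for every small $\delta>0$, a lower bound of the form $N(E) \ge c\,(E-E_0)^{M}$ for some fixed power $M=M(\delta,N,\rho)$, on an interval $E\in(E_0,E_0+\varepsilon)$. Since $\gamma \le 0$ always, and since a polynomial lower bound forces $-\log N(E) \le M\log(1/(E-E_0)) + O(1)$, hence $\log(-\log N(E))/\log(E-E_0) \to 0$, this is exactly what is needed. So the entire task reduces to a suitable polynomial lower bound on the IDS near $E_0$.

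The mechanism is the same Dirichlet-bracketing bound $N(E) \ge \frac{1}{L}\,\PP(E_0(H_{\omega,L}^D) < E)$ from \eqref{eq:IDSlow} used in the proof of Theorem \ref{thm:bernoulli}, but now I would choose $L$ to be a fixed (large but $E$-independent, or only mildly $E$-dependent) integer and force the local picture to be approximately the dimer minimizer. The key observation is that, by Theorem \ref{thm:thm2}, any configuration on $\{1,\dots,L\}$ (with $L$ even) consisting of the values $\pm d_{max}$ with $n^+=n^-$ is a genuine minimizer, and such configurations — or rather configurations with displacements within $\epsilon$ of $\pm d_{max}$, alternating appropriately — occur with probability at least $(C_1\epsilon^N)^{L}$ by the tail hypothesis \eqref{eq:mutails} and independence. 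On such a configuration one builds a trial function by taking the true ground state $\psi$ of $H^N_{\Lambda_0}(d_{max})$ (and its mirror image $\tilde\psi$), concatenating rescaled copies so the result is continuous, and multiplying by a cutoff $\theta_L$ supported away from the Dirichlet endpoints, exactly as in \eqref{eq:approxef}. The kinetic error from the cutoff is $O(1/L)$ times ratios of boundary values of $\psi$; since the configuration is (nearly) the alternating minimizer, the amplitude ratio $r$ of \eqref{eq:hopper} is controlled and the accumulated amplitude over $L$ sites of a balanced configuration stays bounded above and below, so the trial energy exceeds $E_0$ by at most $C/L$. Choosing $L \sim (E-E_0)^{-1/2}$ (or any power) makes the trial energy $< E$.

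The remaining point — and the main obstacle — is that the displacements are only within $\epsilon$ of $\pm d_{max}$, not exactly equal, so the local Neumann ground state energy $E_0(\omega_i)$ is not exactly $E_0$: one has $E_0(\omega_i) \le E_0 + \eta(\epsilon)$ where $\eta(\epsilon)\to 0$, and Hölder continuity of $q$ together with standard perturbation estimates for the Neumann eigenvalue on the unit cube gives $\eta(\epsilon) \le C\epsilon^{\rho'}$ for some $\rho'>0$ (here is where the Hölder hypothesis on $q$ enters; one also needs that the local ground states depend continuously, in $H^1$, on the displacement, which follows from analytic/resolvent perturbation theory since the eigenvalue is simple and isolated). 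Patching these near-ground-states together is not exact, so one incurs an additional matching error at the cube interfaces; this is handled by inserting a further mollification across each interface (again costing $O(L)$ terms each $O(\eta(\epsilon))$ plus $O(1/L)$ from cutoffs), so altogether $\langle\psi_\omega,H^D_{\omega,L}\psi_\omega\rangle - E_0 \le C/L + C L\,\epsilon^{\rho'}$. Now optimize: take $\epsilon$ a suitable power of $1/L$, say $\epsilon = L^{-2/\rho'}$, so the total error is $O(1/L)$, and take $L\sim(E-E_0)^{-1/2}$. Then $\PP(E_0(H^D_{\omega,L})<E) \ge (C_1\epsilon^N)^L = \exp(-cL\log L) = \exp(-c'(E-E_0)^{-1/2}\log(1/(E-E_0)))$. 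This is only a stretched-exponential, not polynomial — so I would instead keep $L$ \emph{fixed} and not let it grow: for fixed large even $L$ the probability $\PP(A_L)$ of a near-minimizing balanced configuration is a fixed positive constant $c_0(L)>0$, and the trial energy on $A_L$ exceeds $E_0$ by at most a fixed $C(L,\epsilon)$; one then only needs $E - E_0$ larger than this fixed gap, which fails to give behavior \emph{at} $E_0$. The correct resolution, and the genuinely delicate step, is to let both $L\to\infty$ and $\epsilon\to 0$ at rates tuned so that $L\log(1/\epsilon)$ grows only \emph{logarithmically} in $1/(E-E_0)$: concretely, because the error is $C/L + CL\epsilon^{\rho'}$ one can afford $\epsilon$ exponentially small in $L$, i.e. $\log(1/\epsilon)\sim L$, giving error $O(1/L)$ and $\PP(A_L)\ge \exp(-cNL\log(1/\epsilon))=\exp(-c'L^2)$; with $L\sim (E-E_0)^{-1/2}$ this is still stretched-exponential. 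To actually reach $\gamma=0$ one must exploit that one does \emph{not} need all $L$ sites near $\pm d_{max}$: it suffices that the configuration be \emph{exactly} at $\pm d_{max}$ with $n^+=n^-$ — which already forces $E_0(H^D_{\omega,L})$ close to $E_0$ with error only $O(1/L)$ from the cutoff, at probability cost that is only $\exp(-cL)$ if $\mu$ has atoms at $\pm d_{max}$, but \emph{zero} otherwise. Under only \eqref{eq:mutails} without atoms, the honest statement is: pick $\epsilon$ comparable to $(E-E_0)^{1/2}$ and $L$ a \emph{fixed large constant}; then $A_L$ has probability $\ge C_1^L\epsilon^{NL}\sim (E-E_0)^{NL/2}$, a polynomial, while on $A_L$ the trial energy is $\le E_0 + C/L + C L\epsilon^{\rho'} \le E_0 + C/L + C(E-E_0)^{\rho'/2}$. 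For this to be $<E$ we need $C/L < E-E_0$, which again fails for fixed $L$. Thus the real mechanism must be: choose $L = L(E) \to \infty$ slowly, $L(E) \sim \log(1/(E-E_0))$, so that $C/L(E) \ll (E-E_0)$ is \emph{false} — so even this does not close. I expect the actual proof in \cite{Baker} sidesteps this by a smarter trial function whose energy error is genuinely $o(1/L^2)$ rather than $O(1/L)$ — e.g. using that $u_N' $ vanishes at the integer-plus-half points so the cutoff commutator is second order — giving error $O(1/L^2)$, whence $L\sim(E-E_0)^{-1/2}$ yields error $O(E-E_0)$ and $\PP(A_L)\ge \exp(-cL\log L)$ is still only stretched-exponential but with exponent $1/2$; iterating this idea with $\epsilon$ shrinking like a power of $1/L$ and $L$ chosen so that $L\log L\sim \log(1/(E-E_0))$, i.e. $L\sim \log(1/(E-E_0))/\log\log(1/(E-E_0))$, finally gives $N(E)\ge \exp(-c\log(1/(E-E_0))) = (E-E_0)^{c}$, the desired polynomial lower bound, and hence $\gamma=0$. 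The heart of the argument, and where all the hypotheses (symmetry, the power-law tail bound \eqref{eq:mutails}, Hölder continuity of $q$) are consumed, is this simultaneous tuning of $L$, $\epsilon$, and the trial-function error; the perturbation estimate $E_0(a)-E_0 \le C|a - a_{\mathrm{corner}}|^{\rho'}$ and the accompanying $H^1$-continuity of the local ground state are the technical lemmas I would establish first, after which the optimization is bookkeeping.
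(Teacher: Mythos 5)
There is a genuine gap, and it starts with a misreading of what $\gamma=0$ requires. From the definition (\ref{eq:lifshitsexp}), $\gamma=0$ does \emph{not} require a polynomial lower bound $N(E)\ge c(E-E_0)^M$: any bound of the form $N(E)\ge \exp\bigl(-c\,|\log(E-E_0)|^k\bigr)$ with a fixed power $k$ already gives $\log(-\log N(E))\le \log c + k\log|\log(E-E_0)|$, which divided by $\log(E-E_0)$ tends to $0$. Your insistence on reaching a polynomial bound is exactly what makes your optimization of $L$ and $\epsilon$ impossible to close, and indeed your write-up never closes it --- it ends in speculation about what the real proof ``must'' do. The paper's proof accepts a bound of the form $N(E)\ge \exp\bigl(-C|\log(E-E_0)|^3\bigr)$ and is done.

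The second, more structural, problem is your trial function. By restricting to configurations that are $\epsilon$-close to the balanced dimer minimizer you force $u_N$ to stay bounded above and below, so the cutoff error in (\ref{eq:approxef}) really is only $O(1/L)$; beating $E-E_0$ then forces $L\gtrsim (E-E_0)^{-1}$, and the probability cost $(C_1\epsilon^N)^L$ becomes a true stretched exponential, i.e.\ $\gamma<0$. The mechanism in the paper is different: the trial function is the $C^1$-concatenation of the local Neumann ground states for the \emph{actual} displacements $\omega_i$, so that $\langle\psi_\omega,H\psi_\omega\rangle-E_0$ splits into a boundary/cutoff term plus $\sum_i(E_0(\omega_i)-E_0)w_i$ with weights $w_i\le 1$. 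Because $\mu$ is symmetric, $\log u_N(i+1/2)$ is a symmetric random walk, and the reflection principle (exactly as in the proof of Theorem~\ref{thm:bernoulli}) gives, with probability bounded below, an excursion of height $\sqrt{L}$ that makes the cutoff term exponentially small in $\sqrt{L}$ --- so $L\sim \log^2(1/(E-E_0))$ as in (\ref{eq:Lchoice}) suffices, not a power of $(E-E_0)^{-1}$. The eigenvalue-shift term is then handled exactly where your Hölder lemma enters: $|E_0(a_1)-E_0(a_2)|^p\le C\int|q(x-a_1)-q(x-a_2)|^p dx$ plus (\ref{eq:mutails}) gives $\PP(E_0(\omega_1)-E_0<\delta)\ge C_1(\delta/C)^{N/\rho}$, and demanding $E_0(\omega_i)-E_0<(E-E_0)/L$ at every site costs only $\bigl(C(E-E_0)/L\bigr)^{NL/\rho}=\exp\bigl(-C|\log(E-E_0)|^3\bigr)$ with this logarithmic choice of $L$. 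That is sub-power growth of $-\log N(E)$, hence $\gamma=0$. Your instinct that the Hölder continuity of $q$ feeds a perturbation bound on $E_0(\cdot)$ is correct, but without the random-walk control of the cutoff term and without realizing that a $\log^3$ bound is enough, the argument as proposed does not prove the theorem.
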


In the following we sketch the proofs of Theorems~\ref{thm:nonbernoulli} and \ref{thm:lifshitsexp}, referring for additional details to \cite{Baker}.

\vspace{.3cm}

To \emph{prove Theorem~\ref{thm:lifshitsexp}} we follow the general strategy of the proof of Theorem~\ref{thm:bernoulli}, starting with (\ref{eq:IDSlow}) and using the test function $\psi_{\omega} := \theta_L u_N/\|\theta_L u_N\|$. However, the construction of $u_N$ needs to be modified as follows: On each interval $[i-1/2,i+1/2]$, $i\in\{1,\ldots,L\}$, $u_N$ is chosen to coincide with a constant multiple of the positive ground state of the Neumann problem for $-d^2/dx^2+q(x-i-\omega_i)$ on $[i-1/2,i+1/2]$. Scaling constants are chosen such that $u_N(1/2)=1$ and $u_N$ is continuously differentiable throughout $[1/2,L+1/2]$. As we now generally have $E_0(\omega_i) \not= E_0$, this leads to extra terms in the bound
\begin{eqnarray} \label{eq:IDSmodlower}
N(E) & \ge & \frac{1}{L} \PP\left( \frac{|\langle \theta_L \psi_{\omega}, -\theta_L''\psi_{\omega}\rangle| + 2|\langle \theta_L \psi_{\omega}, \theta_L' \psi_{\omega}'\rangle |}{\|\theta_L \psi_{\omega}\|^2} \right.\nonumber \\
& & \left.\mbox{}+ \sum_{i=1}^L (E_0(\omega_i)-E_0) \frac{\|\theta_L \psi_{\omega}\|^2}{\|\theta_L \psi_{\omega}\|^2} < E-E_0 \right),
\end{eqnarray}
here $\|\theta_L \psi_{\omega}\|_1^2 := \int_{i-1}^i \theta_L^2 \psi_{\omega}^2$.

Due to the symmetry of $\mu$, the numbers $\log u_N(i+1/2)$, $i=1,\ldots,L$, are still a symmetric random walk (but not Bernoulli). Versions of the reflection principle and central limit theorem for general symmetric random walks and a choice of $L$ as in (\ref{eq:Lchoice}) (with a suitable positive constant replacing $\log r$) lead to the bound
\begin{eqnarray} \label{eq:IDSmodlower2}
N(E) & \ge & \frac{C}{L} \PP \left( \sum_{i=1}^L (E_0(\omega_i)-E_0) \frac{\|\theta_L \psi_{\omega}\|_i^2}{\|\theta_L \psi_{\omega}\|^2} < E-E_0 \right) \nonumber \\
& \ge & \frac{C}{L} \PP \left( \sum_{i=1}^L (E_0(\omega_i)-E_0) < E-E_0 \right) \nonumber \\
& \ge & \frac{C}{L} \left( \PP (E_0(\omega_1)-E_0 < \frac{E-E_0}{L}) \right)^L.
\end{eqnarray}
In Lemma~2.1 of \cite{BLS1} the continuity of $E_0(\cdot)$ was shown. The proof given there provides the bound
\[
|E_0(a_1)-E_0(a_2)|^p \le C \int |q(x-a_1)-q(x-a_2)|^p\,dx
\]
for any $p\ge 2$. Uniform h\"older continuity of $q$ gives h\"older continuity of $E_0(\cdot)$. Using $E_0 = E_0(d_{max}) = E_0(-d_{max})$ and (\ref{eq:mutails}) we see that $\PP(E_0(\omega_1)-E_0 <\delta) \ge C_1(\delta/C)^{N/\rho}$. We plug into (\ref{eq:IDSmodlower2})
\[
N(E) \ge \frac{C}{L} \left( \frac{E-E_0}{L} \right)^{N/\rho}.
\]
From this bound, having chosen $L$ through (\ref{eq:Lchoice}), a calculation shows that the Lifshits exponent vanishes.

\vspace{.3cm}

The proof of Theorem~\ref{thm:nonbernoulli} is based on the standard upper bound, e.g.\ \cite{Carmona/Lacroix},
\begin{equation} \label{eq:standardup}
N(E) \le C \PP (E_0(H_{\omega,L}^N) \le E).
\end{equation}
We choose $L$ through
\begin{equation} \label{eq:Lchoice2}
s_0 e^{-2C_1(L+1)} \le E-E_0 \le s_0 e^{-2C_1L}
\end{equation}
with constants $s_0$ and $C_1$ to be determined later. By the calculation done in (\ref{eq:Neumannbrack}),
\begin{equation}
E_0(H_{\omega,L}^N) \ge \sum_{i=1}^L E_0(\omega_i) \frac{\int_{i-1}^i |\psi_{\omega}|^2}{\int_{1/2}^{L+1/2} |\psi_{\omega}|^2},
\end{equation}
where $\psi_{\omega}$ is the ground state of $H_{\omega,L}^N$. By a priori bounds (e.g.\ \cite{Stolz1}) there exists $C_1>0$ such that
\[
e^{-C_1L} \le \int_{i-1}^i |\psi_{\omega}|^2\,dx \le e^{C_1L}
\]
uniformly in $L\in \N$, $i\in \{1,\ldots,L\}$ and all configurations $\omega$. Using this $C_1$ in (\ref{eq:Lchoice2}) we further estimate
\[
\PP(E_0(H_{\omega,L}^N) \le E) \le \PP\left( \sum_{i=1}^L \frac{E_0(\omega_i)-E_0}{L} \le s_0\right) \le e^{-\gamma_0 L}.
\]
Here the last step is a large deviations bound, which is applicable with suitably chosen $s_0>0$ and $\gamma_0>0$ due to the assumption (\ref{eq:nonbernoulli}). Note for this that $E(\omega_i)-E_0$ are non-negative random variables which are strictly positive with positive probability. With this $s_0$ in (\ref{eq:Lchoice2}) if follows that $e^{-\gamma_0L} \le C(E-E_0)^{\alpha}$, where $\alpha:=s_0/4C_1$. This completes the proof.

\section{Concluding remarks} \label{sec5}

With the above results we have only started to touch the various possibilities for the low-energy asymptotics of the IDS in the random displacement model. There are several other regimes which we haven't considered yet:

(i) For one-dimensional random displacement models with {\it non-symmetric} distribution, in particular the case $\mu = p \delta_{d_{max}} + (1-p) \delta_{-d_{max}}$ with $p\not= 1/2$ we expect that the IDS might have Lifshits tails.

(ii) It would be most interesting to decide if the uniqueness of the minimizing periodic configuration established in Theorem~\ref{thm:thm2}(b) leads to Lifshits tails of the IDS at $E_0$ for the multi-dimensional random displacement model with general (or suitable) distributions $\mu$. Beyond uniqueness of the minimizing configuration this would require to have quantitative results on the probability that other configurations have ground state energy near $E_0$.

In this context we mention the recent work of Klopp and Nakamura \cite{Klopp/Nakamura} on sign-indefinite Anderson models, where some phenomena similar to those found by us for the random displacement model appear. In particular, they find that Lifshits tail as well as van Hove asymptotics of the IDS at the bottom of the spectrum are both possible in their model, depending on the choice of single-site potential and distribution of the random parameters. They have informed us about work in preparation \cite{Klopp/Nakamura2} which, when combined with the uniqueness result Theorem~\ref{thm:thm2}(b) above, should indeed lead to Lifshits-type asymptotics of the IDS for multi-dimensional random displacement models as considered here. This will need suitable assumptions on the distribution $\mu$ of the displacements, namely that $\mu$ is concentrated on the corners of $[-d_{max},d_{max}]^d$.

(iii) We have used the non-overlap condition supp$\,\mu \subset [-d_{max},d_{max}]^d$, $d_{max}+r=1/2$, mostly for technical reasons. In particular, it is crucial for the Neumann-bracketing arguments used in \cite{BLS1} and also Section~\ref{sec3} above. However, relaxing this condition will also lead to different phenomena. We mention the recent work by Fukushima \cite{Fukushima} who studies the random displacement model (\ref{eq:hamiltonian}), (\ref{eq:potential}) for positive $q$ and displacements with unbounded distribution $\mu$. In this case it is easily seen that the almost sure spectrum is $[0,\infty)$, due to the presence of large empty regions in typical single-site configurations (while in our setting the spectral minimum would be strictly positive). Under this condition Fukushima establishes Lifshits tails of the IDS at $0$. Another interesting task would be to look at intermediate cases, where supp$\,\mu$ is bounded but not small, allowing overlapping finite clusters of single-site potentials, but no large empty regions.

(iv) Under alternative (ii) all random configurations give the same ground state energy. This is an example of a random operator with a stable spectral boundary (as opposed to fluctuation boundaries). In other examples of this type, for a discussion see Sections 6B and 9 of \cite{Pastur/Figotin}, this has been found to lead to van Hove behavior of the IDS, i.e.\ $N(E) \sim (E-E_0)^{d/2}$ as for the unperturbed Laplacian. We also expect this here.

\vspace{.5cm}

\noindent {\bf Acknowledgements:} M.\ L.\
would like to acknowledge partial support through NSF grant DMS-0600037. G.\ S.\ was partially supported through NSF grand DMS-0653374.

\bigskip

\end{document}